\title{Improving polynomial bounds for the Graphical Traveling Salesman Problem with release dates on paths}
\titlerunning{Improving polynomial bounds for the GTSP-rd on paths} 
\author{Thailsson Clementino}{Instituto de Computação - UFAM, Brazil}{thailsson.clementino@icomp.ufam.edu.br}{}{}
\author{Rosiane de Freitas}{Instituto de Computação - UFAM, Brazil}{rosiane@icomp.ufam.edu.br}{}{}
\authorrunning{T. Clementino and R. de Freitas} 
\keywords{    algorithms, dynamic programming, graph theory, paths, polynomial complexity, Traveling Salesman Problem} 
\begin{document}

\maketitle

\begin{abstract}
    The Graphical Traveling Salesman Problem with release dates (GTSP-rd) is a variation of the TSP-rd where each vertex in a 
    weighted graph $G$ must be visited at least once, respecting the release date restriction. The edges may be traversed 
    multiple times if necessary, as in some sparse graphs.  
     This paper focuses on solving the GTSP-rd in paths. We consider two objective functions: minimizing the route completion
      time (GTSP-rd (time)) and minimizing the total distance traveled (GTSP-rd (distance)). We present improvements to 
      existing dynamic programming algorithms, offering an $O(n)$ solution for paths where the depot is located at the 
      extremity and an $O(n^2)$ solution for paths where the depot is located anywhere. For the GTSP-rd (distance), we 
      propose an $O(n \log \log n)$ solution for the case with the depot at the extremity and an $O(n^2 \log \log n)$ solution for the
       general case.
\end{abstract}

\section{Introduction}

The Traveling Salesman Problem (TSP) is a well-known combinatorial optimization problem that
 seeks to determine the shortest possible route to visit a given set of cities exactly
  once and 
 return to the origin city \citep{cook2011traveling}. In the literature, the TSP is typically modeled as a weighted complete
  graph $G = (V,E)$, where each vertex in $V$ represents a city, and the weight associated with each edge in $E$ represents
   the distance between two cities. However, some works \citep{miliotis1981computational, ratliff1983order} explore the TSP
    without the assumption that the input graph is complete or without transforming it into a complete graph
     \citep{hargrave1962relation}. This variant of the TSP is referred as the Graphical Traveling Salesman Problem (GTSP).

  In the GTSP it is assumed that all cities (or vertices) are ready to be visited by the salesman at any time, but this 
  assumption may not align with real-world scenarios where we can view the salesman problem as a delivery problem and the
    goods or products become available at different times. 
To address these constraints, the Graphical Traveling Salesman Problem with Release Dates (GTSP-rd) was introduced as a
 variant of the problem. Moreover, in this variant  
 we define the starting vertex as the depot and allow more than one route starting and ending at the depot.
  The decision to be made is whether it is better to start a route that delivers the already available products to the 
  customers or wait until more products become available.
     
    In this paper, we address the GTSP-rd, focusing on instances where the inputs are 
paths. Our study explores the GTSP-rd with two 
different objective functions in this 
context: minimizing the route completion time (GTSP-rd (time)) and minimizing the total traveled 
distance (GTSP-rd (distance)).

Previous works in the literature \citep{archetti2015complexity, reyes2018complexity} have used dynamic programming to solve
 the GTSP-rd(time) and GTSP-rd(distance) problems. For GTSP-rd(time), these studies proposed an \( O(n^2) \) algorithm for
  paths with depots located at the extremities and an \( O(n^3) \) algorithm for more general path structures, where depots
   can be positioned anywhere. Similarly, for GTSP-rd(distance), algorithms with the same complexities were proposed.

In this work, we present improvements to the existing dynamic programming algorithms for GTSP-rd(time), including an
 \( O(n) \) solution for paths with depots at the extremities and an \( O(n^2) \) solution for more general path structures
  with depot located arbitrarily in any vertex. We also improve the GTSP-rd(distance) algorithms, proposing an \( O(n \log \log n) \) solution
   for the first case and an \( O(n^2 \log \log n) \) solution for the second.

The remainder of this paper is structured as follows: in Section~\ref{sec:tsprd}, we provide a formal definition of 
Graphical Traveling Salesman Problem with release dates (GTPS-rd). In Section~\ref{sec:on_path} we 
discuss this problem restricted to paths. We continue this discussion in Section \ref{subsec:path_depot_end} 
by examining a special case where the 
depot is situated at the extremity of the path. Following that, Section \ref{subsec:path_depot_anywhere}, we address the 
more general scenario of a path with the depot located anywhere. Finally, in Section \ref{sec:conclusion}
we present our concluding remarks and future works.  

\section{The Graphical Traveling Salesman Problem with release dates}\label{sec:tsprd}

Although previous and recently published works propose solutions for specific graph classes 
in the TSP-rd, they define (model) the problem as a complete graph. This results in a mismatch between the problem 
definition and the proposed solutions. To encompass potential solutions for specific graph classes, we define the 
problem considering not only complete graph as input. A similar approach was taken in the creation of the 
Graphical TSP \citep{fonlupt1993dynamic, cornuejols1985traveling, carr2023new}.

In this section, we formally define the Graphical Traveling Salesman Problem with release dates
 (GTSP-rd). The following definition enables constructing solutions without requiring the transformation
  of every input graph into a complete graph. Consequently, it allows us to exploit the inherent
   graph structure for more efficient solutions if they exist.

The Graphical Traveling Salesman Problem with release dates (GTSP-rd) can be defined as follows:
Given a simple connected graph \( G = (V, E) \), where the vertex set is the union of two sets, \( V = \{0\} \cup N \). The vertex \( 0 \) denotes the 
initial vertex (\textit{depot}), while the set of vertices \( N = \{1, \ldots, n\} \) 
represents the set of \textit{customers} to be visited. Each edge $(i,j) \in E$ is associated with a travel time (distance), denoted by \(d_{ij}\). 
Additionally, a release date $r_i \geq 0$ is associated with each vertex $i \in N$, indicating the earliest moment when the item to
 be delivered at vertex $i$ can depart from the depot.

A \textit{route} $\mathcal{R}$ is a closed walk in $G$ that starts and ends at the depot. 
Formally, $\mathcal{R} = [v_0, v_1, \dots, v_s, v_{s+1}]$, where $v_0 = v_{s+1} = 0$, 
\( S = \{v_1, \dots, v_s\} \subseteq N \), and \( (v_k, v_{k+1}) \in E \) for all \( k \in \{0, 1, \dots, s\} \). 
The vertices in \( S \) are partitioned into two subsets: \( S^d \), which contains the vertices where deliveries 
are made, referred to as \textit{delivery vertices}, and \( S^t = S \setminus S^d \), referred to as 
\textit{traverse vertices}.
The total distance traveled on a route is determined by
    \( d_{\mathcal{R}} = \sum_{0}^{s}d_{(v_k, v_{k+1})} \).
    The dispatch time of a route, \( T_\mathcal{R} \), is defined as the moment the salesman departs from the depot to 
    serve the set \( S^d \). The route \( \mathcal{R} \) must only begin after the latest release date in \( S^d \),
     ensuring \( T_\mathcal{R} \geq \max_{v \in S^d} \{r_v\} \).

A \textit{solution} to GTSP-rd consists of a sequence of $x$ routes $\mathcal{R}_1, \mathcal{R}_2, \dots, \mathcal{R}_x$ 
containing the vertices set $S_1,S_2, \dots, S_x$, these routes must be done consecutively by the Traveling Salesman in 
order of dispatch time $ T_{\mathcal{R}_{1}} \leq  T_{\mathcal{R}_{2}} \leq \dots \leq T_{\mathcal{R}_{x}}$. A route $\mathcal{R}_{j}$ 
can only leave the depot if the previous route has already been attended, that is,
 $T_{\mathcal{R}_{j-1}}+d_{\mathcal{R}_{j-1}} \leq T_{\mathcal{R}_{j}}$ for $j \in \{1,\cdots, x\}$. 
A solution to GTSP-rd is feasible if all the set $S_{j}^{d} \subseteq S_j$ form a partition of $N$.
  
  Figure \ref{fig:solution} provides an example of solution containing three routes. 
  $ \mathcal{R}_1 = [0, 4, 8, 9, 3, 2, 7, 5, 0]$ (green), 
    $ \mathcal{R}_2 = [0, 1, 5, 6, 5, 0]$ (orange) and 
    $ \mathcal{R}_3 = [0, 9, 10, 9, 0]$ (red) with the delivery vertices $S_1^d = \{4,8,9,3,2,7\}$, $S_2^d = \{1,5,6\}$
    and $S_3^d = \{10\}$. The dispatch times could be $T_{\mathcal{R}_1} = 5$, $T_{\mathcal{R}_2} = 22$ and
      $T_{\mathcal{R}_3} = 43$. Route $\mathcal{R}_1$ leaves the vertex $0$ in time $5$ and complete at time $22$ when
       route $\mathcal{R}_2$ can start. 
    A solution of GTSP-rd consists of one or more routes, where, by definition, at least the vertex $0$ is repeated in 
each route. Moreover, in some instances a vertex must be revisited several times.

  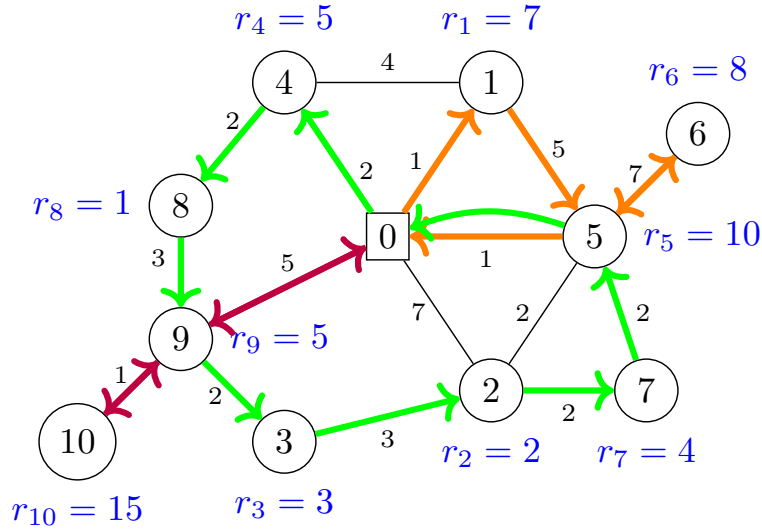
\begin{figure}[h]
    \begin{center}
        
        \resizebox{0.75\textwidth}{!}{
    \begin{tikzpicture}
      
      \node[draw] (0) at (0,0) {0};
      \node[draw,circle] (1) at (1,1.5) {1};
      \node[draw,circle] (2) at (1,-1.5) {2};
      \node[draw,circle] (3) at (-1,-2) {3};
      \node[draw,circle] (4) at (-1,1.5) {4};
      \node[draw,circle] (5) at (2,0) {5};
      \node[draw,circle] (6) at (3,1) {6};
      \node[draw,circle] (7) at (2.5,-1.5) {7};
      \node[draw,circle] (8) at (-2,0.3) {8};
      \node[draw,circle] (9) at (-2,-1) {9};
      \node[draw,circle] (10) at (-3,-2) {10};
      \node[above=0.1em of 1, blue] {$r_1 = 7$};
      \node[right=0.1em of 5, blue] {$r_5 = 10$};
      \node[above=0.1em of 6, blue] {$r_6 = 8$};
      \node[below=0.1em of 7, blue] {$r_7 = 4$};
      \node[below=0.1em of 2, blue] {$r_2 = 2$};
      \node[below=0.1em of 3, blue] {$r_3 = 3$};
      \node[right=0.1em of 9, blue] {$r_9 = 5$};
      \node[below=0.1em of 10, blue] {$r_{10} = 15$};
      \node[left=0.1em of 8, blue] {$r_{8} = 1$};
      \node[above=0.1em of 4, blue] {$r_{4} = 5$};
    
      \draw[->,line width=0.6mm, draw=orange] (0) -- (1) node[midway, left] {\scriptsize 1};
      \draw[->,line width=0.6mm, draw=orange] (1) -- (5) node[pos=0.4, right] {\scriptsize 5};
      \draw[->,line width=0.6mm, draw=orange] (5) -- (0) node[midway, below] {\scriptsize 1};
      \draw[->,line width=0.6mm, draw=green] (5) to[bend right=20] (0);
      \draw[->,line width=0.6mm, draw=orange] (6) -- (5) node[pos=0.3, left] {\scriptsize 7};
      \draw[->,line width=0.6mm, draw=orange] (5) -- (6);
      \draw[->,line width=0.6mm, draw=green] (7) -- (5) node[pos=0.5, right] {\scriptsize 2};
      \draw (2) -- (5) node[midway, left] {\scriptsize 2};
      \draw[->,line width=0.6mm, draw=green] (2) -- (7) node[midway, below] {\scriptsize 2};
      \draw (2) -- (0) node[midway, left] {\scriptsize 7};
      \draw[->,line width=0.6mm, draw=green] (0) -- (4) node[pos=0.4, right] {\scriptsize 2};
      \draw (4) -- (1) node[midway, above] {\scriptsize 4};
      \draw[->,line width=0.6mm, draw=green] (4) -- (8) node[midway, above] {\scriptsize 2};
      \draw[->,line width=0.6mm, draw=green] (8) -- (9) node[pos=0.3, left] {\scriptsize 3};
      \draw[->,line width=0.6mm, draw=purple] (0) -- (9) node[midway, above] {\scriptsize 5};
      \draw[->,line width=0.6mm, draw=purple] (9) -- (0);
      \draw[->,line width=0.6mm, draw=green] (9) -- (3) node[pos=0.2, below] {\scriptsize 2};
      \draw[->,line width=0.6mm, draw=green] (3) -- (2) node[midway, below] {\scriptsize 3};
      \draw[->,line width=0.6mm, draw=purple] (10) -- (9) node[pos=0.3, above] {\scriptsize 1};
      \draw[->,line width=0.6mm, draw=purple] (9) -- (10);
      
    \end{tikzpicture}
        }
    \caption{\label{fig:solution} A GTSP-rd solution, containing the three routes $\mathcal{R}_1,\mathcal{R}_2,\mathcal{R}_3$.           
    }

\end{center}
\end{figure}

Given the set of solutions, we focus on those that optimize two objective functions also explored other
 works \citep{archetti2011complexity,archetti2015complexity,reyes2018complexity,montero2023solving}.
   For the first, a deadline $D$ to complete all routes is given, and 
   it seeks to minimize the total distance traveled $\sum_{i=1}^{x} d_{\mathcal{R}_i}$ 
   (\textbf{GTSP-rd(distance)}). This type
    of objective function is also known as total sum. To the second, no deadline is 
   given and the total time needed to complete all routes $T_{\mathcal{R}_{x}}+d_{\mathcal{R}_{x}}$ is minimized
    (\textbf{GTSP-rd(time)}), that is, minimize the makespan.   

When all release dates are equal, that is,  $r_1 = r_2 = \cdots = r_n$, the GTSP-rd (time) and GTSP-rd (distance) problems 
are equivalent.
Furthermore, GTSP and GTSP-rd are also equivalent in this scenario, making GTSP a special case of GTSP-rd. Hence, the GTSP-rd problem is NP-Hard for both 
objective functions.  However, in \citet{archetti2015complexity} and \citet{reyes2018complexity} was demonstrated 
that for certain graph classes polynomial solutions exist.

We aim to explore the GTSP-rd within special graph classes, discerning the levels of complexity and identifying potential gaps 
for efficient solutions. This examination delineates the boundaries of tractability and the challenges posed by
 various graph structures. In this paper, we deal with a fundamental graph class, the paths. The results are detailed in
  followings sections.

 \section{GTSP-rd on paths}\label{sec:on_path}  

 A \textit{path} is a simple graph  $P = (V,E)$ whose vertices can be arranged in a linear sequence 
 in such a way that two vertices are adjacent if they are consecutive in the sequence \citep{bondy2008graph}, that is,
  $V = \{v_0,v_1,\dots,v_k\}$, $E = \{(v_0,v_1), (v_1,v_2), \dots , (v_{k-1},v_k) \}$.
The vertices $v_0$ and $v_k$ are the \textit{extremities}, and the vertices $v_1,\dots,v_{k-1}$ are \textit{internal vertices} of $P$.
 
To address the GTSP-rd on paths, we first consider the special case where one of the 
extremities of the path is the depot, specifically the path $P = (\{0\} \cup N, E = \{(v_i, v_{i+1}) : 0 \leq i \leq n-1\})$.
 Without loss of generality, we assume that vertex $0$ is the left extremity and that the vertices are ordered as $1, \dots, n$. 
After that, we use this special case to construct a solution to general case where the depot is located anywhere in path.

 The previous works \citep{archetti2015complexity, reyes2018complexity} proposes algorithms to solve 
 the GTSP-rd in paths (Table~\ref{tbl1}).
 They are related works, since the first one solves the general path case 
 in $O(n^3)$ and the second one solves to special case where the depot is 
 at a path extremity in $O(n^2)$. Both of them using dynamic programming as technique.

    \begin{table}[h]
        \caption{Previous complexity results at GTSP-rd for paths.}\label{tbl1}
        
        \begin{center}
        \begin{tabular}{|c|c|c|c|}
            \hline 
        \textbf{Paper}                          & \textbf{Path}      & \textbf{time}     & \textbf{distance} \\ \hline
        \multirow{2}{*}{\cite{archetti2015complexity}} & Depot at extremity & -                 & -                 \\ \cline{2-4} 
                                                & General case       & $O(n^3)$          & $O(n^3)$          \\ \hline 
        \multirow{2}{*}{\cite{reyes2018complexity}}    & Depot at extremity & $O(n^2)$          & $O(n^2)$          \\ \cline{2-4} 
                                                & General case       & -                 & -                 \\ \hline
        \end{tabular}
        \end{center}
        
        \end{table}

In the following sections, we explore proposed solutions for the GTSP-rd on paths found in the literature,
while provide some enhancements to these approaches. In the Section \ref{subsec:path_depot_end} we explore the
     special case where the depot is an extremity and in Section \ref{subsec:path_depot_anywhere} we explore the
      general case.

\section{Special case: depot on an extremity}\label{subsec:path_depot_end}

To simplify, when discussing a path with the depot at an extremity, we denote the distance from vertex $u$ to
vertex $0$ (depot) as $\tau_u$, this distance can be found using a traverse graph algorithm (such as Depth First Search) 
along the path, where for a path $P' \subseteq P$ with vertex $u$ at one extremity and vertex $0$ at the other,
 $\tau_u = \sum_{e \in E(P')} d_e$.
   
The subsequent properties and definitions are crucial for solving the problem. Although they are addressed in
\citet{reyes2018complexity} and \citet{archetti2015complexity}, here we adjust them according to the adopted notation.
   
First we assume that $r_i \leq r_{i+1}$ for $i \in N$. With this assumption, Proposition~\ref{prop:p3} can also 
       be applied.
    
    \begin{proposition}\textbf{\cite{archetti2015complexity}}\label{prop:p3}
        Given two vertices $i$ and $j$ such that $i<j$, if $\tau_i < \tau_j$ then there is exist an optimal solution such 
        that $i$ and $j$ are delivered in the same route. 
    \end{proposition}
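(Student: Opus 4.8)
The plan is to start from an arbitrary optimal solution $\mathcal{R}_1,\dots,\mathcal{R}_x$ and argue that, if the deliveries of $i$ and $j$ lie in two different routes, one can move the delivery of $i$ into the route that already serves $j$ without changing any walk or any dispatch time. So suppose $i\in S_a^d$ and $j\in S_b^d$ with $a\neq b$ (the argument will not care whether $a<b$ or $a>b$). Since $j$ is a delivery vertex of $\mathcal{R}_b$, feasibility forces $T_{\mathcal{R}_b}\ge\max_{v\in S_b^d}r_v\ge r_j$, and the standing assumption $r_1\le\cdots\le r_n$ together with $i<j$ gives $r_j\ge r_i$; hence $\mathcal{R}_b$ departs the depot at a time that is also large enough to deliver $i$.

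The second ingredient is to use the path structure to show that $\mathcal{R}_b$ already passes through $i$. Because $0$ is an extremity of $P$, the vertices sit on a line with the depot at one end and $\tau$ is non-decreasing as one moves away from $0$; hence $\tau_i<\tau_j$ places $i$ strictly between $0$ and $j$ on $P$, so any closed walk rooted at the depot that reaches $j$ must traverse every edge of the unique subpath from $0$ to $j$ and, in particular, visit $i$. I would then perform the exchange: remove $i$ from $S_a^d$ (it stays in $S_a^t$, since the walk $\mathcal{R}_a$ is left unchanged) and add $i$ to $S_b^d$. All walks and all dispatch times are untouched, so every precedence constraint $T_{\mathcal{R}_{k-1}}+d_{\mathcal{R}_{k-1}}\le T_{\mathcal{R}_k}$ still holds; the sets $S_1^d,\dots,S_x^d$ still partition $N$, with $i\in S_b\supseteq S_b^d$; the dispatch constraint of $\mathcal{R}_b$ is unaffected because $r_i\le r_j\le\max_{v\in S_b^d}r_v$; and the one for $\mathcal{R}_a$ can only relax. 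Since neither $\sum_k d_{\mathcal{R}_k}$ nor $T_{\mathcal{R}_x}+d_{\mathcal{R}_x}$ changes, the new solution is still optimal, and it now delivers $i$ and $j$ together in $\mathcal{R}_b$; a single application of the exchange therefore suffices.

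The only genuinely delicate step is the claim that the route serving the farther vertex $j$ must visit the closer vertex $i$: this is exactly where both hypotheses are used -- $\tau_i<\tau_j$ and the depot lying at an extremity of the path -- and it is precisely the ingredient that would fail on a general graph or with an internal depot. Everything else is routine bookkeeping against the definition of a feasible solution: checking that reclassifying $i$ from a delivery to a traverse vertex of $\mathcal{R}_a$ and inserting it as a delivery vertex of $\mathcal{R}_b$ keeps the solution feasible while leaving the objective value literally unchanged.
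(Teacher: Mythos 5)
Your proposal is correct: since the paper only cites Proposition~\ref{prop:p3} from \citet{archetti2015complexity} without reproducing a proof, there is no in-paper argument to compare against, but your exchange argument is exactly the standard justification — with the depot at an extremity, $\tau_i<\tau_j$ forces any closed walk serving $j$ to pass through $i$, and $T_{\mathcal{R}_b}\ge r_j\ge r_i$ (using the ordering $r_i\le r_j$ for $i<j$) lets you reclassify $i$ as a delivery vertex of that route without altering any walk, dispatch time, or objective value. The only (harmless) details left implicit are the trivial case where $i$ and $j$ already share a route and the possibility that $S_a^d$ becomes empty after the swap, neither of which affects feasibility or optimality.
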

    
Thus, from this point forward, we only need to consider instances with pairs $i,j$ where
    $i<j \text{ and } \tau_i \geq \tau_j$ as input. 
   When it is false, we can simply disregard $i$ and retain only $j$ in the path
     resulting in the same solution.
    
    \begin{definition}\textbf{\cite{reyes2018complexity}}
       Two routes $\mathcal{R}_1$ and $\mathcal{R}_2$ with $min\{r_i \text{ } | \text{ }  i \in S_{1}^{d}\} < min\{r_j \text{ } | \text{ } j
       \in S_{2}^{d}\}$ are \textit{non-interlacing} if and only if $max\{r_i \text{ } | \text{ } i \in S_{1}^{d}\} < min\{r_j \text{ } 
       | \text{ } j\in S_{2}^{d}\}$.
    \end{definition}

   \begin{definition}\textbf{\cite{reyes2018complexity}}
       A solution $\mathcal{X} $ containing non-interlacing routes can be characterized by the set of
       customers with highest index in each route, i.e., $\mathcal{X} = \{v_1, v_2, \dots , v_k, n\}$ with $1 \leq v_1 \leq v_2 \leq \dots \leq v_k \leq n$, 
       indicating that customers $S_{1}^{d} = \{1, \dots , v_1\}$ are attended on the first route, orders $ S_{2}^{d} = \{v_{1 + 1}, \dots , v_2\}$ 
       are delivered on the second route, and so on.
    \end{definition}

    \begin{lemma}\textbf{\cite{reyes2018complexity}}\label{prop:p1}
        Any feasible solution for a GTSP-rd on path with the depot on an extremity can be transformed into a feasible 
        solution with non-interlacing routes, without increase in the total travel time.
    \end{lemma}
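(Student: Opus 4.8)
The plan is to take an arbitrary feasible solution and perform a finite sequence of local exchange operations, each of which removes an ``interlacing'' between two routes while not increasing the total travel time, until no interlacing remains. First I would set up the exchange argument: suppose routes $\mathcal{R}_a$ and $\mathcal{R}_b$ are interlacing, meaning (after sorting routes by minimum release date in their delivery sets) that $\min\{r_i : i \in S_a^d\} < \min\{r_j : j \in S_b^d\}$ but $\max\{r_i : i \in S_a^d\} \ge \min\{r_j : j \in S_b^d\}$. Using the assumption $r_i \le r_{i+1}$ and Proposition~\ref{prop:p3} (so that we may assume $i < j \Rightarrow \tau_i \ge \tau_j$ on the reduced instance), I would argue that the interlacing can be expressed in terms of customer indices: there exist customers $p \in S_a^d$ and $q \in S_b^d$ with $q < p$. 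The key move is to reassign $p$ from $\mathcal{R}_a$ to $\mathcal{R}_b$ (or, symmetrically, pull $q$ into $\mathcal{R}_a$), choosing the direction that keeps feasibility.

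Next I would verify the two things such an exchange must preserve. \emph{Feasibility:} the dispatch time of the route that received a new customer still satisfies $T_{\mathcal{R}} \ge \max_{v \in S^d}\{r_v\}$ — this is where I would use that $\mathcal{R}_b$ already starts no earlier than $r_q$ and that the release dates are sorted, so moving a customer of smaller index (hence smaller release date, or handled by the $\tau$-reduction) into a later-dispatched route does not violate its release constraint; and the route that lost a customer is trivially still feasible. \emph{No increase in total travel time / completion time:} on a path, serving the delivery set $S^d$ in a single route costs $2\tau_{\max}$ where $\tau_{\max} = \max_{v \in S^d} \tau_v$ (go to the farthest required vertex and come back, picking up all intermediate vertices as traverse vertices). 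So the length of a route is governed solely by the farthest delivery vertex. I would show that after the exchange the farthest-vertex values of the two routes do not jointly increase the makespan — in fact, consolidating the ``deep'' vertices into fewer routes can only help, because on a path each extra route that reaches depth $\tau$ pays $2\tau$ again.

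The third step is the termination / potential-function argument: I must show the process of removing interlacings halts. I would define a potential such as the number of pairs of routes that interlace, or the sum over routes of some index-spread measure, and show each exchange strictly decreases it while the number of routes and the partition structure stay controlled (merging a now-empty route away if needed). Combined with the monotonicity established above, the final interlacing-free solution has total travel time no larger than the original, which is exactly the claim; note also that a solution with no interlacing pairs can be reindexed so routes are ordered by release date with disjoint release-date intervals, matching the ``non-interlacing'' definition. The main obstacle I anticipate is the feasibility bookkeeping in the exchange step: when I move customer $p$ into $\mathcal{R}_b$, I must ensure the \emph{chain} of dispatch-time constraints $T_{\mathcal{R}_{j-1}} + d_{\mathcal{R}_{j-1}} \le T_{\mathcal{R}_j}$ for all subsequent routes is still satisfiable, since lengthening $\mathcal{R}_b$ pushes later routes later; here I would lean on the fact that $T_{\mathcal{R}_x} + d_{\mathcal{R}_x}$ (the makespan) is what we are bounding, and show the pushed-forward schedule still respects all release dates because those release dates were already $\le$ the old dispatch times. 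Handling the symmetric choice of which customer to move, and the edge case where a route becomes empty, is the fiddly part but is routine once the cost formula $2\tau_{\max}$ is in hand.
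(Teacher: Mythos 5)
The paper does not actually prove Lemma~\ref{prop:p1}: it is imported verbatim from \citet{reyes2018complexity} and used as a black box, so there is no in-paper argument to compare yours against. Your sketch follows the natural route for such a proof --- an exchange argument resting on the facts that on a path with the depot at an extremity a route serving $S^d$ costs exactly $2\max_{v\in S^d}\tau_v$, that after the reduction of Proposition~\ref{prop:p3} the indices (sorted by release date) have non-increasing $\tau$, and that a terminating sequence of interlacing-removing moves yields the claim --- and those are indeed the right ingredients.

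Two steps of the plan have genuine gaps as written. First, your feasibility argument is keyed to the wrong ordering. You order $\mathcal{R}_a,\mathcal{R}_b$ by minimum release date and justify inserting $p$ into $\mathcal{R}_b$ by ``$\mathcal{R}_b$ already starts no earlier than $r_q$''; but that only gives $T_{\mathcal{R}_b}\ge r_q$, while the insertion needs $T_{\mathcal{R}_b}\ge r_p$, and $r_p$ may exceed $\max\{r_v : v\in S_b^d\}$ and hence $T_{\mathcal{R}_b}$. This failure can occur exactly when $\mathcal{R}_b$ is dispatched \emph{before} $\mathcal{R}_a$, and then inserting $p$ forces a delay of $\mathcal{R}_b$ that propagates through the chain constraints $T_{\mathcal{R}_{j-1}}+d_{\mathcal{R}_{j-1}}\le T_{\mathcal{R}_j}$. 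The case split must therefore be on dispatch order, not release order: always move the offending customer from the earlier-dispatched route into the later-dispatched one, so that $r_p\le T_{\text{earlier}}\le T_{\text{later}}$ is automatic, and check (using the $\tau$-monotonicity, with a small tie-breaking argument when release dates are equal) that the moved customer is no deeper than the receiving route's deepest delivery vertex, so no route length increases and all dispatch times can be kept. You gesture at ``choosing the direction that keeps feasibility,'' but never identify dispatch order as the criterion, and that is the crux. Second, the termination argument is left open and your candidate potential (number of interlacing route pairs) need not decrease: enlarging $S_b^d$ raises $\max\{r_v: v\in S_b^d\}$ and can create new interlacings with third routes. You need to exhibit and verify a potential that strictly drops under the correct-direction moves (e.g.\ counting pairs of customers whose index order disagrees with the dispatch order of their serving routes), or sidestep the iteration entirely with a single global reassignment --- give the $j$-th dispatched route the next $|S_j^d|$ customers in index order and verify feasibility and non-increase of completion time directly. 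Until those two points are pinned down, the proposal is a plausible outline rather than a proof.
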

   
    Following the Lemma~\ref{prop:p1},we can construct a solution to GTSP-rd (time) for the special case of paths where
     the depot is located at one extremity. This solution uses only non-interlacing routes, meaning routes formed by
      contiguous sequences of vertex indices.

    \begin{lemma}\textbf{\cite{reyes2018complexity}}\label{prop:p2}
        If a solution minimizing the completion time exists, then there exists an optimal
        non-interlacing solution $\mathcal{X} = \{v_1, v_2, \dots , v_k, n\}$ with the property that each partial solution
        $\mathcal{X}_{[1,p]} = \{v_1, v_2, \dots , v_p\}$, delivering orders $\{1,\dots, v_p\}$, for $p = 1, . . ., k$, completes
         the delivery of these order subsets as early as possible.
    \end{lemma}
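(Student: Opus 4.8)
The plan is to realize the claimed solution as the one obtained by backtracking a one-dimensional dynamic program over the customer indices, and to justify its optimality through a monotonicity (exchange) argument. First I would fix notation for non-interlacing solutions: by the preceding discussion a non-interlacing solution $\mathcal{X} = \{v_1,\dots,v_k,n\}$ corresponds to routes $\mathcal{R}_1,\dots,\mathcal{R}_k$ with delivery sets $S_p^d = \{v_{p-1}+1,\dots,v_p\}$ (setting $v_0 = 0$). On a path with the depot at an extremity, the shortest closed walk from the depot covering a given vertex set has length twice the largest $\tau$-value in that set; after the reduction justified by Proposition~\ref{prop:p3} the sequence $\tau_1,\dots,\tau_n$ is non-increasing, so $d_{\mathcal{R}_p} = 2\tau_{v_{p-1}+1}$ depends only on the left endpoint of the $p$-th block. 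Writing $C_p$ for the completion time of route $p$ and $C_0 = 0$, feasibility forces the dispatch time to satisfy $T_{\mathcal{R}_p} \geq \max\{r_v : v \in S_p^d\} = r_{v_p}$ (indices are sorted by release date) and $T_{\mathcal{R}_p} \geq C_{p-1}$, both tight in an earliest schedule, which yields the recurrence $C_p = \max(C_{p-1}, r_{v_p}) + 2\tau_{v_{p-1}+1}$.

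The key observation is then that $C_p$ is a non-decreasing function of $C_{p-1}$ with the remaining data of route $p$ held fixed, hence by composition $C_k$ is a non-decreasing function of $C_p$ for every $p < k$. Consequently, if in a non-interlacing solution we replace the first $p$ routes by any other non-interlacing way of delivering $\{1,\dots,v_p\}$ whose last route still ends at $v_p$ and which completes no later than $C_p$, the result is still feasible and non-interlacing and its overall completion time does not increase. This is the exchange step that lets us push each prefix down to its earliest completion time without harming the global objective.

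To assemble the solution, define $f(m)$ to be the minimum completion time of a non-interlacing (partial) solution delivering exactly $\{1,\dots,m\}$, with $f(0) = 0$; by Lemma~\ref{prop:p1} this equals the minimum completion time over all such partial solutions, and $f(m) = \min_{0 \leq m' < m}\{\max(f(m'), r_m) + 2\tau_{m'+1}\}$. Since an optimal solution is assumed to exist, $f(n)$ is finite and, again by Lemma~\ref{prop:p1}, equals the optimal completion time. Now backtrack: set $v_k = n$, and given $v_p$ let $v_{p-1}$ be an index attaining the minimum in the expression for $f(v_p)$, stopping when the chosen index is $0$. The resulting $\mathcal{X} = \{v_1,\dots,v_k,n\}$ is a non-interlacing solution whose $p$-th route completes exactly at $f(v_p)$ for every $p$, so its completion time is $f(n)$ and it is optimal; and by definition of $f$ no non-interlacing delivery of $\{1,\dots,v_p\}$ can complete before $f(v_p) = C_p$, i.e., each partial solution $\mathcal{X}_{[1,p]}$ completes the delivery of $\{1,\dots,v_p\}$ as early as possible.

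I expect the main obstacle to be the monotonicity claim, and more precisely isolating exactly which quantities of a later route may depend on the history of the earlier ones. The argument needs that the maximum release date inside route $p$ is $r_{v_p}$ regardless of how $\{1,\dots,v_p\}$ was partitioned into earlier routes (which is where sorting by release date and the non-interlacing property are used) and that the route lengths are a function of the current block alone (which is where the path structure and the reduction of Proposition~\ref{prop:p3} enter); once these are in place, the composition-of-monotone-maps argument and the feasibility check of the backtracked schedule are routine.
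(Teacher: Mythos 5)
Your proposal is correct in substance, but note that the paper itself offers no proof to compare against: Lemma~\ref{prop:p2} is imported from \citet{reyes2018complexity} as a black box, and the recurrence you derive for $f(m)$ is exactly the paper's Equation~\ref{eq:reyes}, which the paper likewise takes as given. So what you have done is reconstruct the standard optimal-substructure argument behind both the lemma and the recurrence: (i) on a path with the depot at an extremity, after the reduction of Proposition~\ref{prop:p3} a non-interlacing block $\{v_{p-1}+1,\dots,v_p\}$ costs $2\tau_{v_{p-1}+1}$ and its binding release date is $r_{v_p}$, so an earliest schedule obeys $C_p=\max(C_{p-1},r_{v_p})+2\tau_{v_{p-1}+1}$; (ii) each such map is non-decreasing in $C_{p-1}$, so prefixes can be exchanged for earlier-completing ones without harm; (iii) backtracking the minimizers of $f$ yields a non-interlacing solution whose $p$-th prefix completes exactly at $f(v_p)$ and whose total completion time is $f(n)$, which equals the global optimum via Lemma~\ref{prop:p1}. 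Two points deserve explicit care if you write this out: the recurrence for $f$ needs both directions (the lower bound via monotonicity applied to an arbitrary non-interlacing delivery with last block $\{m'+1,\dots,m\}$, and achievability by gluing an $f(m')$-optimal prefix to that block), and the phrase ``as early as possible'' should be read over all partial solutions delivering $\{1,\dots,v_p\}$, which requires invoking Lemma~\ref{prop:p1} for partial instances, not only for complete ones; also, an arbitrary optimal solution may involve waiting, so the comparison should be made against earliest schedules, as you implicitly do. With those details spelled out, the argument is complete and, as a bonus, constitutes a proof of correctness of Equation~\ref{eq:reyes} that the paper only asserts by citation.
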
 
   
    The Lemma~\ref{prop:p2} shows that this problem has an optimal substructure property, it enables the dynamic 
    programming approaches described below.

   In the following sections, we present dynamic programming approach proposed by \citet{reyes2018complexity},
    and provide optimizations to reduce the time complexity for both GTSP-rd (time) (\ref{sec:improve_n}) and GTSP-rd (distance)
     (\ref{sec:improve_nlog}).
   
   \subsection{GTSP-rd (time)}\label{sec:tsp_rd_time}
   
    At the recurrence proposed (Equation~\ref{eq:reyes}), $c(i)$ calculates the minimum completion time 
    to attend the customers $\{1,\dots,i\}$:
    
    \begin{equation}\label{eq:reyes}
        c(i) = 
        \begin{cases}
             0 , \quad \text{if } i = 0 \\
             \min_{ 0 \leq j < i} \{\max\{c(j),r_i\} + 2\tau_{j+1}\}, \quad \text{otherwise.}
        \end{cases}
    \end{equation}
    
    Consider the recursive step to compute $c(i)$. The delivery to customer $i$ can be included in two types of routes: the 
    first one, is along with other delivery customers (vertices) $j+1, \dots, i-1$, and it is added to the partial solution 
    attending the customers $\{1, \dots, j\}$ when $j \leq i-2$. The second type of route involves creating a new route 
    containing only $i$ when $j = i-1$. In both cases, the minimum completion time for the new route including $i$ is the 
    earliest possible dispatch time for this route, $\max\{c(j), r_i\}$, added to the travel time of route that would 
    be $2\tau_{j+1}$.
    
   It is not difficult to observe that this recurrence relation can be calculated in \( O(n^2) \), since for each \( i \in N \) we need
    to compute the terms \( \max\{c(j),r_i\} + 2\tau_{j+1} \) for minimization where \( 0 \leq j < i \). 
    However,  we will demonstrate how to compute this equation in \( O(n) \). For this, we need to show that the 
    function \( c(i) \) is non-decreasing.
    
    \begin{lemma}\label{lema:cresc}
        The function $c(i)$ is non-decreasing.
    \end{lemma}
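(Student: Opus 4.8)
The plan is to prove $c(i) \le c(i+1)$ for all $i \ge 0$ by induction on $i$, using the recurrence in Equation~\ref{eq:reyes}. The base case $i=0$ is immediate: $c(0)=0$, and since $c(1) = \max\{c(0),r_1\} + 2\tau_1 = \max\{0,r_1\} + 2\tau_1 \ge 0$ (all release dates and distances are nonnegative), we have $c(0)\le c(1)$.

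For the inductive step, assume $c(j) \le c(j+1)$ for all $j < i$, and consider $c(i+1) = \min_{0\le j < i+1}\{\max\{c(j),r_{i+1}\} + 2\tau_{j+1}\}$. Let $j^\star$ be an index attaining this minimum. The idea is to compare against the value obtained in the expression for $c(i)$. First, observe that $r_i \le r_{i+1}$ by the standing assumption that release dates are sorted, so for any fixed $j$ we have $\max\{c(j),r_i\} \le \max\{c(j),r_{i+1}\}$. I would split into two cases according to whether $j^\star \le i-1$ or $j^\star = i$. If $j^\star \le i-1$, then $j^\star$ is also a feasible index in the minimization defining $c(i)$, and therefore
\[
c(i) \;\le\; \max\{c(j^\star),r_i\} + 2\tau_{j^\star+1} \;\le\; \max\{c(j^\star),r_{i+1}\} + 2\tau_{j^\star+1} \;=\; c(i+1),
\]
as desired. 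If instead $j^\star = i$, then $c(i+1) = \max\{c(i),r_{i+1}\} + 2\tau_{i+1} \ge c(i)$ directly, since $2\tau_{i+1}\ge 0$ and $\max\{c(i),r_{i+1}\}\ge c(i)$.

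The main subtlety — and the reason the induction hypothesis is needed rather than just the sortedness of the $r_i$ — is handling the fact that the set of candidate indices grows with $i$: the minimum for $c(i+1)$ might be attained at $j^\star=i$, an index not available when computing $c(i)$. The case analysis above resolves this cleanly, because when $j^\star=i$ the new term itself dominates $c(i)$. (One could alternatively phrase the $j^\star \le i-1$ case without invoking the induction hypothesis at all, since it only uses monotonicity of $\max$ in the $r$ argument and the fact that fewer constraints give a smaller minimum; the induction hypothesis is really only a safety net, but stating the argument inductively keeps it uniform.) I expect no computational obstacles; the only thing to be careful about is ensuring the $j^\star=i$ boundary case is treated separately so that the "subset of feasible indices" argument is not misapplied.
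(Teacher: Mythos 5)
Your proof is correct, and it is essentially the direct counterpart of the paper's argument: the paper proves the same statement by contradiction, splitting the assumed inequality $c(k)>c(k+1)$ into one inequality over the common indices $j<k$ (refuted term-by-term using $r_k\le r_{k+1}$, via a four-way case analysis of $c(j)$ against $r_k$ and $r_{k+1}$) and one for the new index $j=k$ (refuted since $\max\{c(k),r_{k+1}\}+2\tau_{k+1}\ge c(k)$), which matches exactly your two cases $j^\star\le i-1$ and $j^\star=i$. Your phrasing via the minimizer and monotonicity of $\max$ in the $r$ argument is a bit cleaner, and your remark that the induction hypothesis is not actually needed is accurate — neither proof truly uses it.
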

    
    \begin{proof}
        By contradiction, let's suppose that exist a $k \in N$ that $c(k) > c(k+1)$. In general case, this is equivalent to:
   
        \vspace{-1em}
        \begin{equation*}
            \min_{ 0 \leq j < k} \{\max\{c(j),r_k\} + 2\tau_{j+1}\} > \min_{ 0 \leq j < k+1} \{\max\{c(j),r_{k+1}\} + 2\tau_{j+1}\}
        \end{equation*}
    
        We can rewrite the right side of the inequality in the way we explicit the expression when $j = k$.
        
        \vspace{-1em}
        \begin{align*}
            \min_{ 0 \leq j < k} \{\max\{c(j),r_k\} + 2\tau_{j+1}\} > \min\{  &
                \min_{ 0 \leq j < k}\{\max\{c(j),r_{k+1}\} + 2\tau_{j+1}\}, \\ 
                & \max\{c(k),r_{k+1}\} + 2\tau_{k+1}
            \}
        \end{align*}
    
        If this is true, then the two inequalities bellow are also true:

        \begin{align*}
            \min_{ 0 \leq j < k} \{\max\{c(j),r_k\} + 2\tau_{j+1}\} > \min_{ 0 \leq j < k}\{\max\{c(j),r_{k+1}\} + 2\tau_{j+1}\} \quad \quad (I)
        \end{align*}
   
        \vspace{-2em}
   
        \begin{align*}
            \min_{ 0 \leq j < k} \{\max\{c(j),r_k\} + 2\tau_{j+1}\} > \max\{c(k),r_{k+1}\} + 2\tau_{k+1} \quad \quad (II)
        \end{align*}
    
        For the Inequality I: (i) when $c(j) \geq r_{k+1}$ (consequently $c(j) \geq r_k$) the inequality is false because two sides are equal;
        (ii) when $c(j) < r_k$ (consequently $c(j) < r_{k+1}$), the inequality is false because we are considering that $r_k \leq r_{k+1}$, and the right 
        side is greater or equal to left side; (iii) when $c(j) \geq r_k$ and $c(j) < r_{k+1}$, the inequality is false 
        because $r_k \leq c(j) < r_{k+1}$, and the right side is greater; (iv)
        The case when $c(j) < r_k$ and $c(j) \geq r_{k+1}$ is impossible, because $c(j) < r_k \leq r_{k+1} \leq c(j)$ is a
         contradiction.
        
        \vspace{1em}

        For the Inequality II, by definition of $c(k)$, we can rewrite as:
       
         \vspace{-1em}
   
        \begin{align*}
            c(k) > \max\{c(k),r_{k+1}\} + 2\tau_{k+1}
        \end{align*}
    
        It's trivial to see that this inequality is false. When $c(k) > r_{k+1}$ or $c(k) < r_{k+1}$ the right side has the greater value of inequality.  
        
        Given that both inequalities are false, it follows that the inequality $c(k) > c(k+1)$ is also false for any $k \in N$.
         Consequently, since such a $k$ does not exist, it follows that the function $c(i)$ is non-decreasing.   
    \end{proof}
    
    \subsubsection{Proposed Solution for GTSP-rd (time) in \(O(n)\)}\label{sec:improve_n}
        
    As $c(i)$ is a non-decreasing function (Lemma~\ref{lema:cresc}), we can divide the general case of
     Equation~\ref{eq:reyes} into two parts. For a fixed $i$, the first term of the 
    sum $\max\{c(j),r_i\} + 2\tau_{j+1}$ will be $r_i$ as long as the inequality $c(j) \leq r_i$ holds true. 
    Otherwise, $c(j)$ will be the first term of the sum.
    Let's define $k$ as the last value of $j$ such that  $c(j) \leq r_i$. Formally,
     $k = max_{0\leq j \leq n}\{j \text{ }|\text{ } c(j) \leq r_i\}$. So, the base case remain the same $c(0)=0$, 
     but the general case of the Equation~\ref{eq:reyes} can be written in the following way:

    \begin{equation}\label{eq:f_div}
        c(i) = \min_{ 0 \leq j < i}
        \begin{cases}
        r_i + 2\tau_{j+1}, \quad \text{if } j \leq k\\
        c(j) + 2\tau_{j+1}, \quad \text{otherwise }
        \end{cases}
    \end{equation}
    
    It means that we have two sets of routes we can add customer $i$. If $j \leq k$, then the previous constructed route has already finished, and 
    the earliest possible dispatch time is $r_i$. Otherwise, the previous route will finish after the $i$ release, and the earliest possible 
    dispatch time is $c(j)$.
    
    The Equation~\ref{eq:f_div} can be rewritten as follows: 
    
    \begin{equation} \label{eq:e1}
        c(i) = \min \{
            \min_{ 0 \leq j \leq k}\{ r_i + 2\tau_{j+1}\},
            \min_{ k < j < i}\{ c(j) + 2\tau_{j+1}\}
        \}
    \end{equation}
   
    Given Equation~\ref{eq:e1}, we will demonstrate that $c(n)$ can be computed in $O(n)$.
    To achieve this, we must establish that $k$ (Lemma~\ref{lema:lk}), 
     $\min_{0 \leq j \leq k}\{ r_i + 2\tau_{j+1}\}$ (Lemma~\ref{lema:l2}), and $\min_{k < j < i}\{ c(j) + 2\tau_{j+1}\}$ (Lemma~\ref{lema:l3}) can be
      determined in constant time ($O(1)$).
    
    \begin{lemma}\label{lema:lk}
        $k = max_{0\leq j \leq n}\{j \text{ } | \text{ } c(j) \leq r_i\}$ is calculated in $O(1)$ for some $i \in [1\dots n]$.
    \end{lemma}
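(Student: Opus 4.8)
The plan is to process the values $i = 1, 2, \dots, n$ in increasing order and maintain $k$ as a running pointer that only ever moves forward, so that the total work across all $i$ is $O(n)$, giving amortized $O(1)$ per index. First I would observe that, by definition, $k$ depends on $i$ only through the threshold $r_i$: we need the largest $j \le n$ with $c(j) \le r_i$. Since $c$ is non-decreasing (Lemma~\ref{lema:cresc}), the set $\{j : c(j) \le r_i\}$ is a prefix $\{0, 1, \dots, k\}$ of $\{0, \dots, n\}$, so $k$ is well-defined as its right endpoint (and $k \ge 0$ because $c(0) = 0 \le r_i$). Moreover, since we have assumed $r_i \le r_{i+1}$, the threshold is itself non-decreasing in $i$, hence the prefix can only grow: the value of $k$ for index $i+1$ is at least the value for index $i$.

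The second step is to turn this monotonicity into an algorithm. I would keep a variable $k$ (initialized to $0$) that persists between iterations; when moving from $i$ to $i+1$, I advance $k$ by a while-loop ``while $k < n$ and $c(k+1) \le r_{i+1}$: $k \gets k+1$'' (using the already-computed table of $c$-values). Correctness of the pointer at each step follows from the prefix structure above; the cost argument is the standard amortization: the pointer $k$ is incremented at most $n$ times over the whole run and never decremented, and each iteration does $O(1)$ extra work, so the grand total is $O(n)$ and the per-index cost is $O(1)$ amortized, which is what the lemma claims.

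The one subtlety worth flagging — and the only place the argument could go wrong — is a circular-dependency concern: computing $c(i)$ via Equation~\ref{eq:e1} uses $k$, but here we are locating $k$ using the stored values $c(0), \dots, c(j)$ for $j < i$, and in fact $k < i$ always holds (since $c(i) \ge r_i$ would be needed for $k = i$, and even if $c(i) = r_i$ we only need indices $j < i$ in the recurrence, so it is harmless to let the pointer reach $i-1$ but not beyond). So when we reach index $i$, all the $c$-values the while-loop inspects are already final, and there is no ordering problem. I would make this explicit so the reader sees that the pointer never overtakes the frontier of computed values. Beyond that the proof is routine; the real content is simply the two monotonicities (of $c$ and of $r$) combining to make $k$ a forward-only pointer.
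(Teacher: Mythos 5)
Your proposal is correct and follows essentially the same approach as the paper: both maintain $k$ as a forward-only pointer across the iterations over $i$, using the monotonicity of $c$ (Lemma~\ref{lema:cresc}) together with the sorted release dates to argue that the pointer only advances, yielding $O(n)$ total work and amortized $O(1)$ per index. Your additional remarks on the prefix structure and on the fact that the pointer never overtakes the frontier of already-computed $c$-values are sound elaborations of the same argument (though note the small slip: $k=i$ would require $c(i)\leq r_i$, not $c(i)\geq r_i$).
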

    
    \begin{proof}

       Let's demonstrate the process of computing $k$ for each $i \in [1 \dots n]$. We start by initializing $k=0$ for the first iteration when $i=0$. For all subsequent iterations, we denote $k'$ as 
        the value of $k$ from the previous step. The process involves iterates over $j$, beginning at $k'$, while $c(j) \leq r_i$.
        The iteration stops when this condition is no longer true, and the new value of $k$ is set to the last $j$ such that $c(j) \leq r_i$. Since $c(j)$
         and the release dates $r_i$ are sorted, there is no need to consider values of $j \leq k'$. This follows   
        from the fact $c(j) \leq c(k')$, and $c(k') \leq r_{i-1} \leq r_i$. The search for new $k$ start at $k'$, which avoids
        unnecessary computations. Over the entire process, the variable $j$ ranged sequentially from $0$ to $n-1$ in the worst case. Hence, the total time
        complexity for computing $k$ for all $i \in [1 \dots n]$ is $O(n)$. Furthermore, each update of $k$ for a specific $i$ is
        performed in constant time $O(1)$.


    \end{proof}

   Once we have calculated the $k$, the Lemma~\ref{lema:l2} shows that $\min_{ 0 \leq j \leq k}\{ r_i + 2\tau_{j+1}\}$ can be transformed into a constant sum.
   
    \begin{lemma}\label{lema:l2}
        $\min_{ 0 \leq j \leq k}\{ r_i + 2\tau_{j+1}\}$ is calculated in $O(1)$ for some $i \in [1 \dots n]$.
    \end{lemma}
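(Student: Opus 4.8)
The plan is to exploit the fact that $r_i$ is a constant with respect to the minimization variable $j$, so it factors out of the minimum:
\[
\min_{0 \leq j \leq k}\{ r_i + 2\tau_{j+1}\} \;=\; r_i + 2\min_{0 \leq j \leq k}\{\tau_{j+1}\} \;=\; r_i + 2\min_{1 \leq \ell \leq k+1}\{\tau_{\ell}\}.
\]
Thus the only nontrivial quantity is the prefix minimum $\min_{1 \leq \ell \leq k+1}\{\tau_{\ell}\}$, and it suffices to show that this value can be retrieved in $O(1)$ once $k$ is known.

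The key step is to invoke the preprocessing assumption established right after Proposition~\ref{prop:p3}: every instance we consider contains only pairs $i<j$ with $\tau_i \geq \tau_j$; otherwise vertex $i$ is discarded. Equivalently, the sequence $\tau_1, \tau_2, \dots, \tau_n$ is non-increasing along the path. Consequently the minimum over any prefix $\tau_1, \dots, \tau_{k+1}$ is attained by its last element, so $\min_{1 \leq \ell \leq k+1}\{\tau_{\ell}\} = \tau_{k+1}$, and therefore
\[
\min_{0 \leq j \leq k}\{ r_i + 2\tau_{j+1}\} \;=\; r_i + 2\tau_{k+1}.
\]

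It remains to note that this expression is evaluated with a constant number of operations: the value $k$ is already maintained in $O(1)$ amortized time per index $i$ by Lemma~\ref{lema:lk}; the array of distances $\tau_{\ell}$ is precomputed once in $O(n)$ by a single traversal of the path; and $r_i$ is given as part of the input. Hence $r_i + 2\tau_{k+1}$ requires only one array lookup and a constant amount of arithmetic, which yields the claimed $O(1)$ bound.

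There is essentially no hard step here; the only points requiring care are the index bookkeeping — the shift from $\tau_{j+1}$ with $0 \le j \le k$ to $\tau_{\ell}$ with $1 \le \ell \le k+1$, together with the observation that $k \le i-1 \le n-1$ so that $\tau_{k+1}$ is well defined — and making explicit that the monotonicity of $\tau$ is a consequence of the preprocessing step rather than an extra hypothesis.
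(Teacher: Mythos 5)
Your argument is correct and matches the paper's own proof: both reduce the minimization to $r_i + 2\tau_{k+1}$ using the non-increasing order of $\tau$ guaranteed by the preprocessing after Proposition~\ref{prop:p3}, with $k$ supplied by Lemma~\ref{lema:lk}. Your version merely spells out the index shift and well-definedness of $\tau_{k+1}$ a bit more explicitly, which is a harmless refinement rather than a different route.
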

    
    \begin{proof}
       As showed in Proposition~\ref{prop:p3}, we work only with instances were for some pair $i,j$, $i<j \Rightarrow  \tau_i \geq \tau_j$. It ensures 
       that the distance array $\tau$ is ordered in non-increasing way. Hence, $\min_{ 0 \leq j \leq k}\{ r_i + 2\tau_{j+1}\} = r_i + 2\tau_{k+1}$.
        It is valid because $r_i$ is constant for each $i \in [1 \dots n]$ and $\tau_{k+1} \leq \tau_j$ for each $j \in [1 \dots k]$. Therefore,
         $\min_{ 0 \leq j \leq k}\{ r_i + 2\tau_{j+1}\}$
        can be substituted by $r_i + 2\tau_{k+1}$ and calculated in $O(1)$ time for some $i \in [1 \dots n]$ and in $O(n)$ to calculate 
        for all $i \in [1 \dots n]$.
        
    \end{proof}
    
    As demonstrated in Lemma~\ref{lema:l2}, $\min_{0 \leq j \leq k}\{ r_i + 2\tau_{j+1}\}$ simplifies to $r_i + 2\tau_{k+1}$. This occurs because all 
    customers $j$ where $0 \leq j \leq k$ have completion times $c(j)$ smaller than $r_i$, allowing us to insert customer $i$ into a route with customers
     $j+1, \dots, i-1$. Among these routes, the best route to minimize completion time is the one closest to the depot. Consequently,
      we choose the route with customers $k+1, \dots, i-1$ to include customer $i$ in the same route. We can rewrite recurrence of Equation~\ref{eq:e1} as:
   
   \begin{equation}\label{eq:e2}
     c(i) = \min \{
         r_i + 2\tau_{k+1},
         \min_{ k < j < i}\{ c(j) + 2\tau_{j+1}\}
     \}
   \end{equation}
    
   Equation~\ref{eq:e1} is now expressed as a minimization of a sum, combined with a larger minimization over 
   the interval $[k+1,i-1]$. The Lemma~\ref{lema:l3} show how to compute this efficiently.

    \begin{lemma} \label{lema:l3}
        $\min_{ k < j < i}\{ c(j) + 2\tau_{j+1}\}$ is calculated in $O(1)$ for some $i \in [1 \dots n]$.
    \end{lemma}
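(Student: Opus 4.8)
The plan is to maintain, alongside the main computation of $c(i)$, an auxiliary data structure that lets us retrieve $\min_{k < j < i}\{c(j) + 2\tau_{j+1}\}$ in $O(1)$ amortized time as $i$ ranges from $1$ to $n$. The key observation is that both endpoints of the minimization window $[k+1, i-1]$ only move forward: the right endpoint $i-1$ increments by one at each step, and by Lemma~\ref{lema:lk} the left endpoint $k+1$ is non-decreasing in $i$ (since $r_i$ is non-decreasing and $c$ is non-decreasing by Lemma~\ref{lema:cresc}). A sliding-window minimum over a sequence with monotonically advancing endpoints is exactly the setting solved by a monotonic double-ended queue (deque).

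First I would introduce the values $a_j = c(j) + 2\tau_{j+1}$, which become available as soon as $c(j)$ is computed. I would maintain a deque holding indices $j$ in the current window $[k+1, i-1]$, kept so that the corresponding $a_j$ values are non-decreasing from front to back; hence the front of the deque always holds the argument of the minimum. When $i$ advances by one, two updates occur: (1) the new index $i-1$ is pushed at the back after popping from the back every index whose $a$-value is at least $a_{i-1}$ (those can never again be the minimum); and (2) the left endpoint advances from its old value $k'+1$ to the new value $k+1$, so every index at the front of the deque that is now strictly less than $k+1$ is popped. After these updates, $\min_{k<j<i}\{c(j)+2\tau_{j+1}\}$ is read off as $a_{f}$ where $f$ is the front index (with the convention that the minimum is $+\infty$ when the window is empty, e.g. when $k = i-1$), giving the claimed $O(1)$ per query.

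I would then argue the amortized bound: each index $j$ is pushed onto the deque exactly once and popped at most once (whether from the back during a value comparison or from the front during a left-endpoint advance), so the total work over all $i \in [1\dots n]$ is $O(n)$, i.e. $O(1)$ amortized per $i$. Combined with Lemma~\ref{lema:lk} and Lemma~\ref{lema:l2}, this shows each application of the recurrence in the form of Equation~\ref{eq:e2} costs $O(1)$ amortized, so $c(n)$ is computed in $O(n)$ overall.

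The main obstacle to state carefully is the interaction between the two pointer movements within a single step: one must verify that the left endpoint $k+1$ used in the window for computing $c(i)$ is indeed the value of $k$ as defined for this particular $i$ (so the deque must be trimmed from the front \emph{before} the query), and that pushing $i-1$ at the back is legitimate even though $c(i-1)$ was only finalized at the previous iteration. A secondary subtlety is the degenerate case $k = i-1$, where the second minimization is empty and only the term $r_i + 2\tau_{k+1}$ survives; the deque convention must return $+\infty$ consistently so that Equation~\ref{eq:e2} still evaluates correctly. Neither difficulty is deep, but both need an explicit sentence so the $O(1)$ claim is fully justified.
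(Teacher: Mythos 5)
Your proposal is correct and follows essentially the same route as the paper: both exploit that the window $[k+1,i-1]$ has monotonically advancing endpoints, insert $a_{i-1}=c(i-1)+2\tau_i$ on the right, evict the indices up to the new $k$ on the left, and charge each element one insertion and at most one deletion to get $O(n)$ total, i.e.\ amortized $O(1)$ per query. The only difference is cosmetic: the paper invokes an off-the-shelf \texttt{minqueue} (citing amortized and worst-case $O(1)$ implementations), while you realize it concretely as a monotone deque and spell out the empty-window and ordering-of-updates details.
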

    
    \begin{proof}
    
        We show that $\min_{ k < j < i}\{ c(j) + 2\tau_{j+1}\}$ can be calculated in $O(n)$ to all $i \in [1 \dots n]$ and consequently in 
        a constant time for some $i \in [1 \dots n]$. 
    
        Firstly, we define $a_j = c(j) + 2\tau_{j+1}$, and the objective is to find the minimum value of $a_j$ such 
        that $k < j <i$. To achieve this, we can use a structure called \texttt{minqueue}, which is nothing but 
        a queue with a \texttt{find\_min} operation, besides the operations of \texttt{enqueue(x)} and \texttt{dequeue(x)}. These operations
        could be implemented in amortized $O(1)$ time complexity \citep{brass2008advanced}, or $O(1)$ in the worst case \citep{sundar1989worst}. 
    
        As demonstrated in Lemma~\ref{lema:lk}, for each $i$, we efficiently generate a corresponding $k$ in $O(1)$ time,
         where $k < i$. With each iteration of $i$, we insert the element $a_{i-1}$ into the \texttt{minqueue} (\texttt{enqueue($a_{i-1}$)}) 
         and subsequently remove all $a_j$ (\texttt{dequeue($a_{j}$)}) for $j \in [k'+1, k]$, where $k'$ represents the value of $k$ from the 
         previous iteration.
    
        Thus, we show that $O(n)$ insertion operations and $O(n)$ removal operations will be performed. We can thus conclude 
        that computing $\min_{ k < j < i}\{ c(j) + 2\tau_{j+1}\}$ for each $i \in [1,n]$ is performed in constant time $O(1)$.
        
    \end{proof}
    
    \begin{theorem}
       The time complexity to calculate \(c(n)\) using Equation~\ref{eq:e2} is \(O(n)\).
     \end{theorem}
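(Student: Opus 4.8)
The plan is to assemble the three preceding lemmas into a single amortized running-time argument for the full computation of $c(n)$ via Equation~\ref{eq:e2}. First I would set up the outer loop: we compute $c(0) = 0$ as the base case and then, for $i = 1, \dots, n$, evaluate $c(i) = \min\{r_i + 2\tau_{k+1},\ \min_{k < j < i}\{c(j) + 2\tau_{j+1}\}\}$. The claim is that each such step costs $O(1)$ amortized, so the total is $O(n)$.

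Next I would itemize the work done in iteration $i$ and cite the relevant lemma for each piece. Computing the index $k = \max_{0 \le j \le n}\{j \mid c(j) \le r_i\}$ is handled by Lemma~\ref{lema:lk}: because both $c(\cdot)$ (by Lemma~\ref{lema:cresc}) and the release dates are non-decreasing, $k$ is itself non-decreasing in $i$, so across all $i$ the pointer $j$ advances from $0$ to at most $n-1$, giving $O(1)$ amortized per step. The term $r_i + 2\tau_{k+1}$ is then a single arithmetic operation once $k$ is known, which is exactly Lemma~\ref{lema:l2} (using that $\tau$ is non-increasing, so the minimum over $0 \le j \le k$ of $2\tau_{j+1}$ is attained at $j = k$). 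Finally, $\min_{k < j < i}\{c(j) + 2\tau_{j+1}\}$ is maintained by the \texttt{minqueue} of Lemma~\ref{lema:l3}: at iteration $i$ we \texttt{enqueue}$(a_{i-1})$ with $a_j = c(j) + 2\tau_{j+1}$ and \texttt{dequeue} the elements $a_j$ for $j$ in the range $(k', k]$ where $k'$ was the previous value of $k$; since each $a_j$ is enqueued once and dequeued at most once and \texttt{find\_min} is $O(1)$, this is again $O(1)$ amortized.

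Then I would observe that all three quantities feeding the final $\min$ are available in constant (amortized) time at iteration $i$, and taking the minimum of two numbers is $O(1)$; hence iteration $i$ costs $O(1)$ amortized, and summing over $i = 1, \dots, n$ yields total time $O(n)$ for computing $c(n)$ (and indeed all of $c(1), \dots, c(n)$). If one wants a worst-case rather than amortized bound, I would remark that the worst-case $O(1)$ \texttt{minqueue} of \citet{sundar1989worst} can be substituted, and the pointer $k$ advances by at most $n$ total steps which can be charged so that the bound is genuinely $O(n)$ overall.

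The main obstacle is not any single deep step but making the amortization bookkeeping airtight: one must verify that the two monotone pointers — the index $k$ and the front of the \texttt{minqueue} — never move backwards and that the dequeue ranges $(k', k]$ over successive iterations partition (a subset of) $\{0, \dots, n-1\}$ without overlap, so that the total number of queue operations is genuinely $O(n)$ rather than $O(n^2)$. This rests crucially on the monotonicity of $c$ established in Lemma~\ref{lema:cresc} and the non-increasing order of $\tau$ guaranteed by Proposition~\ref{prop:p3}; I would state explicitly that these two structural facts are what collapse the naive $O(n^2)$ double minimization into a linear-time sweep.
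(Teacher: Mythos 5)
Your proposal is correct and follows essentially the same route as the paper: invoke Lemmas~\ref{lema:lk}, \ref{lema:l2}, and \ref{lema:l3} to get (amortized) constant cost per iteration for each term of Equation~\ref{eq:e2}, and sum over $i = 1,\dots,n$ to obtain $O(n)$. Your additional remarks on the monotone pointer $k$, the disjoint dequeue ranges, and the worst-case \texttt{minqueue} simply make explicit the bookkeeping the paper delegates to those lemmas.
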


   \begin{proof}
   Lemmas \ref{lema:lk}, \ref{lema:l2}, and \ref{lema:l3} demonstrate that the two terms in the recurrence relation of 
   Equation~\ref{eq:e2} can be computed in constant time for each $i \in [1 \dots n]$. As the operation to determine the
    minimum of two values has constant cost, \(c(n)\) can be calculated in \(O(n)\).
   \end{proof}

    The Algorithm~\ref{alg} calculates $c(n)$ with following the Lemmas \ref{lema:lk}, \ref{lema:l2}, and \ref{lema:l3} 
    and its operations.
    \begin{algorithm}[h]
        \DontPrintSemicolon
        \SetKwInOut{Input}{Input}
        \SetKwInOut{Output}{Output}

        \Input{$n$: number of vertices, $r_i$: release dates, $\tau_i$: travel time from $i$ to depot, $i \in [1, n]$.}
        \Output{$c[i]$: completion time for each node $i \in [1, n]$.}

            $Q \gets minqueue()$ \;
            $k \gets 0$ \;
           $c[0] \gets 0$ \;       
            $c[i] \gets \infty \quad \forall i \in [1,n] $ \;
   
            \;
            \For{$i \leftarrow 1$ \KwTo $n$}{
   
                \While{$k < n$ \textbf{and} $c(k+1) \leq r_i$}{
                    $k \gets k + 1$ \;
                    Q.\texttt{dequeue($a_{k}$)} \;
                }   
   
               \;

                $\min_{ki} \gets $  $Q$.\texttt{find\_min()} \;
                $c[i] \gets min(r_i + 2\tau_{k+1},\min_{ki} )$\;
                $a_i \gets c[i] + 2\tau_{i+1}$ \;
   
                \;
                Q.\texttt{enqueue($a_{i-1}$)}\; 
            }
    
        \caption{GTSP-rd(time) MinQueue in path with depot at the extremity}\label{alg}
    \end{algorithm}


   \subsection{GTSP-rd (distance)}\label{sec:tsp_rd_dist}
   
   In this section, we discuss a scenario where the objective is to minimize the total distance traveled by the Traveling Salesman. If we 
   consider a version of this problem where there's no final deadline $D$ (or if $D$ is sufficiently large), the optimal strategy is to wait until all
    packages are available before initiating deliveries, thus completing a single comprehensive route that includes all customers. The final cost will 
    be $2\tau_1$.
   
    A similar approach could be employed when there is a final deadline $D$. We wait as long as possible to initiate deliveries, incorporating all
     customers who are already available at the time of the latest dispatch. Then, we begin the second route with the first 
     customer $u$ who hasn't been included in the first route by the time of the latest dispatch for $u$, and continue in this 
     manner until there are no more customers left.
   
    Given that $\lambda(i)$ represents the latest time to dispatch customer $i$ in order to serve customers $\{i \cdots n\}$
     with non-interlacing routes, 
    the minimum total distance traveled by these routes is $D - \lambda(1)$. It happens because the latest time to dispatch 
    customer $1$ depends on the latest dispatch time of the next routes, so we do not have waiting time between
    two routes. This idea led \citet{reyes2018complexity} to formulate the 
    following recurrence relation:
    
   \begin{equation}\label{eq:reyes2}
       \lambda(i) = 
       \begin{cases}
            D , \quad \text{if } i = n+1 \\
            \max_{ j > i} \{ \lambda(j) - 2\tau_i  \mid  \lambda(j) - 2\tau_i \geq r_{j-1} \}, \quad \text{otherwise}
       \end{cases}
   \end{equation}
   
   The base case involves introducing a hypothetical customer $n+1$, representing the final deadline $D$.  
   As we are working with non-interlacing routes to determine  the latest time to dispatch customer $i$, 
   we attempt to incorporate it into all previously established routes, such as $\{i, \cdots , j-1\}$.
    To postpone 
    dispatch as much as possible, the latest time to dispatch this route will be $\lambda(j)$ (the latest time for the next route dispatch) minus the cost of 
    executing this route, which is $2\tau_i$.
   
   The feasibility condition $\lambda(j) - 2\tau_i \geq r_{j-1}$ ensures that the time of dispatch from this 
   route is at least the greatest release date of the customers into this route, which is $r_{j-1}$.
   
   Therefore, it is not difficult to observe that this recurrence relation can be calculated in \( O(n^2) \) time complexity,
    since for each $i \in N$, we need to compute the terms $\lambda(j) - 2\tau_i$ and choose the maximum among them, for 
    each $j \in [i+1,n+1]$.

   \subsubsection{Proposed Solution for GTSP-rd (distance) in $O(n \log \log n)$}\label{sec:improve_nlog}
   
   In this section we describe how to modify the Equation~\ref{eq:reyes2} to calculate it in $O(n \log \log n)$.
   
   We can rewrite the feasibility inequality $\lambda(j) - 2\tau_i \geq r_{j-1}$ as $\lambda(j) - r_{j-1} \geq 2\tau_i$. 
   Hence, the general case of Equation \ref{eq:reyes2} can be rewrite as:
   
   \begin{equation}\label{eq:dist}
       \lambda(i) = \max_{j > i} \{ \lambda(j) \mid \lambda(j) - r_{j-1} \geq 2\tau_i \} - 2\tau_i
   \end{equation}

   The base case remains the same, $\lambda(n+1) = D$. The Equation~\ref{eq:dist} can be calculated in $O(n \log \log n)$ 
   using auxiliary heaps \citep{thorup2000ram,williams1964algorithm}. The Lemma \ref{theom:dist_1} show how to calculate 
    the Equation~\ref{eq:dist} in $O(n \log \log n)$.
   
   \begin{theorem}\label{theom:dist_1}
       The complexity to calculate $\lambda(1)$ through Equation~\ref{eq:dist} is  $O(n \log \log n)$.
   \end{theorem}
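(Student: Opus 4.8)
The plan is to evaluate $\lambda$ in the order forced by the recurrence, $i = n, n-1, \dots, 1$ (so that every $\lambda(j)$ with $j>i$ is already known when $\lambda(i)$ is needed), while maintaining a single priority queue, and then to show that the whole computation makes only $O(n)$ queue operations. I would start from the reformulated recurrence of Equation~\ref{eq:dist}, $\lambda(i) = \bigl(\max\{\lambda(j)\mid j>i,\ \lambda(j)-r_{j-1}\ge 2\tau_i\}\bigr) - 2\tau_i$, and isolate the monotonicity that makes the speed‑up possible: the instance has already been preprocessed so that the array $\tau$ is non‑increasing (this is exactly the property used in the proof of Lemma~\ref{lema:l2}), hence the threshold $2\tau_i$ is \emph{non‑decreasing} as $i$ runs from $n$ down to $1$. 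The consequence I would record as the key claim is that once an index $j$ violates its feasibility condition at some step $i$, i.e. $\lambda(j)-r_{j-1} < 2\tau_i$, it violates it at every later step $i'<i$ as well (because $2\tau_{i'}\ge 2\tau_i$), so $j$ can never again contribute to any $\lambda(i')$.

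The algorithm then maintains a \textbf{max}‑priority‑queue $H$ whose elements are indices $j$, ordered by their value $\lambda(j)$, each carrying the tag $\mathrm{key}_j = \lambda(j) - r_{j-1}$. Processing step $i$ (for $i=n$ down to $1$) does three things: insert index $i+1$ into $H$ (its $\lambda$‑value is available since $i+1>i$; for $i=n$ this is the base value $\lambda(n+1)=D$); then, while the maximum element of $H$ has tag $< 2\tau_i$, delete it (lazy deletion); finally output $\lambda(i) = \lambda(j^\ast) - 2\tau_i$, where $j^\ast$ is the index currently on top of $H$. If $H$ becomes empty, the subinstance admits no feasible schedule for the given $D$, which I would treat with the standard convention $\lambda(i)=-\infty$ and not dwell on.

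For correctness I would argue that, at the moment of the query for step $i$, every index in $H$ lies in $\{i+1,\dots,n+1\}$, and every index that was deleted so far was deleted at a step $\ge i$, hence (by the key claim) is infeasible for $i$; therefore $H$ contains \emph{all} indices that are feasible for $i$. The while‑loop strips the top until the maximum has tag $\ge 2\tau_i$, i.e. until the top is feasible; since $H$ is ordered by $\lambda$‑value and contains all feasible indices, this top element is exactly the feasible index maximizing $\lambda(j)$, which is precisely what Equation~\ref{eq:dist} demands. The stale small‑tag elements still buried in $H$ are harmless: they sit strictly below $j^\ast$ and are removed only if and when they rise to the top at a later — hence still valid — step. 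For the running time, each index is inserted once and deleted at most once, and each of the $n$ steps performs a single $\mathtt{find\_max}$ after its deletion loop, so the algorithm executes $O(n)$ priority‑queue operations in all. Realising $H$ as a RAM integer priority queue supporting $\mathtt{insert}$, $\mathtt{find\_max}$ and $\mathtt{delete\_max}$ in $O(\log\log n)$ per operation \citep{thorup2000ram} (assuming, as is standard, integer input data; a plain binary heap \citep{williams1964algorithm} would give $O(\log n)$ instead) yields the claimed $O(n\log\log n)$ bound for computing $\lambda(1)$.

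The main obstacle is getting the lazy‑deletion bookkeeping exactly right: one must verify that (i) discarding the top element whenever its tag is below $2\tau_i$ never throws away an index that is still feasible for some future step — this is the point at which the monotonicity of $\tau$ is essential — and (ii) leaving stale small‑tag elements in the queue can never corrupt a later answer. Once these two points are pinned down, the complexity claim follows from a routine charging argument, and the only remaining ingredient is the citation of an $O(\log\log n)$‑time integer priority queue.
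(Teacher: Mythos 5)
Your proposal is correct and rests on the same pillars as the paper's proof: evaluate $\lambda$ for $i = n, n-1, \dots, 1$, keep a max-priority structure over the already-computed values $\lambda(j)$, exploit the monotonicity $2\tau_i \leq 2\tau_{i-1}$ (from Proposition~\ref{prop:p3}) so that an index which once violates $\lambda(j) - r_{j-1} \geq 2\tau_i$ can never become feasible again, and charge $O(n)$ total queue operations at $O(\log\log n)$ each via Thorup's structure. The difference is in how infeasible indices are purged. The paper purges eagerly: it keeps two synchronized heaps, a max-heap $H_1$ on the values $\lambda(j)$ and an auxiliary min-heap $H_2$ on the slacks $\lambda(j) - r_{j-1}$, and before each query removes from $H_1$ every element whose slack in $H_2$ falls below the current threshold, so that $H_1$ contains exactly the feasible candidates and a plain \texttt{find\_max} gives the answer. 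You purge lazily: a single max-heap keyed by $\lambda(j)$ with the slack stored as a tag, stripping the top only when its tag is below $2\tau_i$. Your correctness argument for this variant is the right one and is sound -- deleted elements are infeasible forever, so the heap always contains a superset of the feasible set, and after stripping, the top is both feasible and $\lambda$-maximal over that superset, hence over the feasible set; stale low-tag elements buried below the top cannot be reported before being stripped. What each approach buys: your version needs only one heap and only \texttt{delete\_max} (no arbitrary-element removal and no cross-referencing between two structures, which the paper's Algorithm~\ref{alg:dist_1} needs via \texttt{minIndex}), at the cost of a slightly more delicate invariant; the paper's version pays for the second heap but keeps the simpler invariant that the queried heap holds exactly the feasible values. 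Both yield $n+O(1)$ insertions, at most that many deletions, and $O(n)$ \texttt{find\_max} calls, hence $O(n \log\log n)$ overall (or $O(n\log n)$ with a binary heap), matching the theorem.
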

   
   \begin{proof}
       Here, we will utilize the heap structure introduced by \citet{thorup2000ram}, which supports the operations 
       \texttt{insert}, \texttt{remove}, and \texttt{find\_min} (\texttt{find\_max}) with the following computational costs: 
       $O(\log \log n)$ for \texttt{insert}, amortized $O(\log \log n)$ for \texttt{remove}, and $O(1)$ for \texttt{find\_min} (\texttt{find\_max}).
       
       If we didn't have the feasibility inequality, it would suffice to find the largest value of $\lambda(j)$ for all $j > i$. However, some $j$'s
       don't respect the inequality and should not be considered in maximization. 
       
       To calculate $\lambda_{max} = \max_{j > i} \{ \lambda(j) \mid \lambda(j) - r_{j-1} \geq 2\tau_i \}$ we maintain 
       a max-heap $H_1$ containing only the values of $\lambda(j)$ that the inequality are true. 
       
       As the computation of $\lambda(i)$ depends on all $j > i$, then we compute it from $n$ to $1$. Then, 
       $\forall i \in \{n, n-1, \ldots, 1\}$ we use the operation \texttt{find\_max} in $H_1$,
       $\lambda(i) = \lambda_{max} - 2\tau_i$. After it, we use the operation  \texttt{insert($\lambda(i)$)} into
        $H_1$ to be used in the next iterations of $i$. 
       
       We must guarantee that for the actual iteration $i$, $H_1$ has only elements $\lambda(j)$ that $\lambda(j) - r_{j-1} \geq 2\tau_i$.
       Given that $h_1$ are the set of elements in $H_1$ in the current iteration $i$ and $h_1'$ are the elements
       in $H_1$ in the next iteration $i-1$, then $h_1' \setminus \{\lambda(i)\} \subseteq h_1$. 
       
       It will follow from the fact that between two iterations, two operations must be done. First, 
       insert $\lambda(i)$. Proposition~\ref{prop:p3} establishes that $2\tau_{i} \leq 2\tau_{i-1}$, and no new 
       element will be inserted. On the second operation, the elements where $ 2\tau_{i-1} > \lambda(j) - r_{j-1} \geq 2\tau_{i}$ 
       will be removed from $H_1$.
       
       
       A quick and efficient way to perform this removal is by using an auxiliary min-heap $H_2$. It will maintain the
        items $a_j = \lambda(j) - r_{j-1}$. Each element $a_i$ will be inserted in $H_2$ together when $\lambda(i)$ 
        is inserted in $H_1$. To know the items $\lambda(k)$ that will be removed 
        from $H_1$ we get the minimum $a_{k} \in H_2$ and 
        remove if $2\tau_{i-1} > a_{k}$ and also remove $a_k$ from $H_2$. It will continue until no more elements violate 
        the inequality.
        
        As max-heap $H_1$ and min-heap $H_2$ takes $n$ operations of insertion and in the worst case $n-1$ operations of
         removal. The complexity to calculate $\lambda(1)$ is $O(n \log \log n)$. 
    \end{proof}

     A drawback from use Thorup's heap is that space cost, which is $O(n2^{\epsilon \omega})$, where $\epsilon$ is 
    any positive constant and $\omega$ is the number of bits used to represent the greater number in heap. Also, there is 
    a randomized implementation giving $O(\log \log n)$ expected time and $O(n)$ space.

    Algorithm~\ref{alg:dist_1} implements Equation~\ref{eq:dist} using the structures described in Theorem~\ref{theom:dist_1}.  
    Its complexity depends on the choice of the auxiliary heap \citep{brodal2013survey}. If binary heaps \citep{williams1964algorithm} are used, 
    the time complexity becomes \( O(n \log n) \).

    \begin{algorithm}[h]
        \DontPrintSemicolon

        \SetKwInOut{Input}{Input}
        \SetKwInOut{Output}{Output}

        \Input{$n$: number of vertices, $r_i$: release dates, $\tau_i$: travel time from $i$ to depot, $i \in [1, n]$.}
        \Output{$\lambda[i]$:  latest time to dispatch customer $i$ in order to serve customers $\{i \cdots n\}$
        with non-interlacing routes.}

            $H_1 \gets maxHeap()$ \;
            $H_2 \gets minHeap()$ \;
   
            \;
   
            $\lambda[n+1] \gets D$ \;
            $H_1$.\texttt{insert}($\lambda[n+1]$) \;
            $H_2$.\texttt{insert}($\lambda[n+1] - r_{n}$) \;   

            \;

            \For{$i \gets n$ \textbf{downto} $1$}{
    
                \While{$H_2$.\texttt{find\_min}() < $2\tau_i$}{
   
                   $ k \gets H_2$.\texttt{minIndex}() \;
   
                   $H_1$.\texttt{remove}($\lambda[k]$)\;
                   $H_2$.\texttt{remove}($H_2$.\texttt{find\_min}())\;
                }   
                \;
   
                $\lambda[i] \gets H_1.\texttt{find\_max}() - 2\tau_i$\;
   
                $H_1$.\texttt{insert}($\lambda[i]$) \;
                $H_2$.\texttt{insert}($\lambda[i] - r_{i-1}$) \;     
            }
    
        \caption{GTSP-rd(distance) Heap in path with depot at the extremity}\label{alg:dist_1}
    \end{algorithm}

    \section{General case}\label{subsec:path_depot_anywhere}
 
    In the general case where the depot can be located anywhere along the path, not just at the extremities, 
    if we remove the depot from path it divides the original path into two disconnected paths.
    Without loss of generality, we denote the set of vertices for these two paths as the \textit{left vertices} $N_l$ 
    and the \textit{right vertices} $N_r$, such that $N = N_l \cup N_r$. Additionally, let $n_l = |N_l|$, $n_r = |N_r|$, and $n_r + n_l = n$.
    
    It's easy to see that a route in a path $P$ with customers belonging to both sets $N_r$ and $N_l$ can be transformed
     into two disjointed routes. Each of these routes exclusively contains customers from one side, maintaining equivalent
      costs. Consequently, we can formulate an optimal solution consisting exclusively of routes 
     comprising customers from the same side.
    
    We use the following notation in this section. To the vertices in $N_l$, release dates and distance are 
    denoted by $r_i^{l}$ and $\tau_{i}^{l}$ respectively. The same are valid to vertices in $N_r$, which are 
    denoted as $r_i^{r}$ and $\tau_{i}^{r}$. As we are only treating two sides separately to build the routes,
     the Proposition \ref{prop:p3} and Lemmas \ref{prop:p1} and \ref{prop:p2} still holds for
     each side separately. Without loss of generality and we relabel the indices of vertices of
      $N_l$ and $N_r$ to $\{1, 2, \ldots, n_l\}$  and $\{1, 2, \ldots, n_r\}$ in such way that $r_i^{l} \leq r_{i+1}^{l}$ 
      for $i \in N_l$ and $r_i^{r} \leq r_{i+1}^{r}$ for $i \in N_r$. 
     Based on Proposition \ref{prop:p3}, to the left (right) side, $i<j \Rightarrow \tau_{i}^{l} \geq \tau_{i}^{l}$ ($\tau_{i}^{r} \geq \tau_{i}^{r}$). 
     Then we have an instance like Figure \ref{fig:p_instance}.
   
   \begin{figure}[h]
       \begin{center}
          \begin{tikzpicture}
               \draw 
               (-6, 0) node[circle, draw] (l1){1}
               (-4.5, 0) node[circle, draw] (l2){2}
               (-3, 0) node (ldot){$\cdots$}
               (-1.5, 0) node[circle, draw](ln){$n_l$}
               (0, 0) node[draw](0){0}
               (1.5, 0) node[circle, draw](rn){$n_r$}
               (3, 0) node (dot){$\cdots$}
               (4.5, 0) node[circle, draw] (2){2}
               (6, 0) node[circle, draw](1){1};
           
               \path[-] (0) edge [bend left=0] node[above, font=\tiny] {} (rn);
               \path[-] (0) edge [bend left=0] node[above, font=\tiny] {} (ln);
               \path[-] (rn) edge [bend left=0] node[above, font=\tiny] {} (dot);
               \path[-] (dot) edge [bend left=0] node[above, font=\tiny] {} (2);
               \path[-] (1) edge [bend left=0] node[above, font=\tiny] {} (2);
               \path[-] (ln) edge [bend left=0] node[above, font=\tiny] {} (ldot);
               \path[-] (ldot) edge [bend left=0] node[above, font=\tiny] {} (l2);
               \path[-] (l1) edge [bend left=0] node[above, font=\tiny] {} (l2);
               
           \end{tikzpicture}
       \end{center}
       \caption{An arbitrary Path instance with the depot in an arbitrary vertex\label{fig:p_instance}.}
   \end{figure}
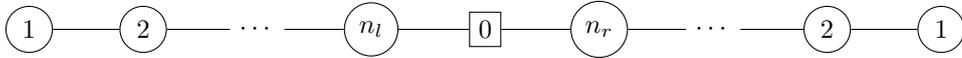   
    
    \subsection{Solution $O(n^2)$ for GTSP-rd (time)}
    
   It can be verified that Lemma~\ref{prop:p1} remains valid for each side of the depot along the general
    case of path. This means we can proceed with our 
   solution construction, employing only non-interlacing routes within the sets $N_l$ and $N_r$ independently.
    Extending the recurrence of Equation~\ref{eq:e2}, 
   we define $c(i,j)$ as the minimum completion time to attend the customers $\{1,\dots,i\} \subseteq N_l$ and $\{1,\dots,j\} \subseteq N_r$. 
   To compute $c(i,j)$, we select the optimal choice between incorporating customer $j$ into a non-interlacing route to the right of depot $R(i,j)$ or 
   including customer $i$ into a non-interlacing route to the left of depot $L(i,j)$.
    
   \begin{equation}\label{eq:rr2}
        c(i,j) = 
        \begin{cases}
            0, \quad \text{if } i = j = 0\\
            \min\{L(i,j),R(i,j)\}, \quad \text{otherwise }
        \end{cases}
    \end{equation}

   As the recurrence relations $L(i,j)$ and $R(i,j)$ for each $i,j$ with $i \in N_l$ and $j \in N_r$ calculate the best choice for each side only containing customers of this side, 
   they are very similar to recurrence relation of Equation \ref{eq:e2}. The function $L(i,j)$ includes a variable analogous
    to $k$, in Equation \ref{eq:e2}.
   The goal is to mark the set of customers that can or can not depart at $r_i^{l}$. 
   But here, for each $i$ we have $n_r$ possible $j$'s then 
   we define $k_j^l = max\{w \in n_l \text{ } | \text{ } c(w,j) \leq  r_i^{l}\}$ for each $j \in n_r$. Similarly, for $R(i,j)$, we define
    $k_i^r = max\{w \in n_r \text{ } | \text{ } c(i,w) \leq  r_j^{r}\}$ for each $i \in n_l$. The functions $L$ and $R$ are defined below:
    
    \begin{equation}\label{eq:rr2_L}
        L(i,j) = \min \{
            r_i^{l} + 2\tau_{k_j^l+1}^{l},
            \min_{k_j^l < w < i}\{ c(w,j) + 2\tau_{w+1}^{l} \}
        \}
    \end{equation}
    
    \vspace{-1em}
   
    \begin{equation}\label{eq:rr2_R}
        R(i,j) = \min \{
            r_j^{r} + 2\tau_{k_i^r+1}^{r},
            \min_{k_i^r < w < j}\{ c(i,w) + 2\tau_{w+1}^{r} \}
        \}
    \end{equation}
    
    On the left, customer $i$ can be integrated into a route along with customers $k_j^l+1,\dots, i-1$ 
    where $k_j^l$ represents the closest customer to on left of depot such that the minimum completion time is less
     than $r_i$. The cost in this case is $r_i^{l} + 2\tau_{k_j^l+1}^{l}$. 
    
    Customer $i$ can also be incorporated into a route alongside customers $w+1,\dots, i-1$ when $w \leq i-2$, or it can form a new route comprising 
    only itself 
    when $w = i-1$. In both cases, the cost is determined by the shortest possible dispatch time for this 
    route, denoted as $c(w,j)$, added to the travel cost of $2\tau_{w+1}^{l}$. From all these possibles ways to
     add $i$ in a route, we chose the one that return the minimum cost.
   Similarly, the same principle applies to customers on the right, including customer $j$.
    
    To solve the recurrence relation using dynamic programming, we evaluate the function $c(i,j)$ for all $n_r \cdot n_l$ 
    possible states, where each computation requires constant time. As a result, the value of $c(n_l, n_r)$ can be determined in $O(n^2)$ time complexity. 
    The following lemmas provide a more detailed explanation of this process.

    \begin{lemma}\label{lemma:subp}
        The number of subproblems of $c(n_l,n_r)$ in Equation~\ref{eq:rr2} are the order of $O(n^2)$.
    \end{lemma}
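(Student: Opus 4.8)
The plan is to bound the number of distinct states $(i,j)$ that the dynamic program must fill in order to evaluate $c(n_l, n_r)$ via the recurrence in Equation~\ref{eq:rr2}. First I would observe that $c(i,j)$ is defined for every pair $(i,j)$ with $i$ ranging over $\{0, 1, \dots, n_l\}$ and $j$ ranging over $\{0, 1, \dots, n_r\}$, since the left part of the path contributes the customers $\{1,\dots,i\}\subseteq N_l$ and the right part contributes $\{1,\dots,j\}\subseteq N_r$, and these two index ranges are independent. This gives a table of size $(n_l+1)\cdot(n_r+1)$.

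Next I would bound this product. Since $n_l + n_r = n$ (as stated when $N_l$ and $N_r$ were introduced), we have $(n_l+1)(n_r+1) \le (n+1)^2$, and more tightly, by the AM–GM inequality, $(n_l+1)(n_r+1) \le \left(\frac{n_l+n_r+2}{2}\right)^2 = \left(\frac{n+2}{2}\right)^2 = O(n^2)$. Either bound suffices to conclude that the number of subproblems is $O(n^2)$. I would also note that each subproblem $c(i,j)$ is reachable: the recurrence for $c(i,j)$ refers to states $c(w,j)$ with $w < i$ and $c(i,w)$ with $w < j$ (through $L(i,j)$ and $R(i,j)$ respectively), so starting from the base case $c(0,0)=0$ every pair in the table is genuinely needed, hence the count is tight at $\Theta(n^2)$ and the $O(n^2)$ bound is the right order.

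I do not expect any real obstacle here — the statement is essentially a counting observation about the size of the DP table, and the only subtlety is making explicit that the two coordinates range independently over the left and right index sets (which follows directly from the definition of $c(i,j)$ and the earlier remark that an optimal solution can be taken to consist only of routes whose customers all lie on the same side of the depot). If anything, the one point worth stating carefully is the passage from "number of index pairs" to "$O(n^2)$" using $n_l + n_r = n$; this is where a reader might otherwise wonder whether the bound should be $O(n_l \cdot n_r)$, which is of course also $O(n^2)$ but could in principle be much smaller when the depot is near an extremity.
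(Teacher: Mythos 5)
Your proposal is correct and follows essentially the same route as the paper: both count the DP table of states $(i,j)$ with $i \in \{0,\dots,n_l\}$ and $j \in \{0,\dots,n_r\}$, giving $n_l n_r + n_l + n_r + O(1)$ subproblems, and conclude $O(n^2)$ via $n_l + n_r = n$. Your additional remarks on reachability and the tighter $O(n_l \cdot n_r)$ bound are fine but not needed for the statement.
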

    
    \begin{proof}
       For any given values of $i$ and $j$ where $i,j \geq 0$, the function $c(i,j)$ is determined by two 
       functions: $L(i,j)$ and $R(i,j)$. The function $L(i,j)$ considers at most the previous $i$ customers 
       from the left when it calculates $c(w,j)$, where $w \in [k_j^l+1,i-1]$. Similarly, the function
        $R(i,j)$ considers at most the preceding $j$ customers from the right when it computes $c(i,w)$, where
         $w \in [k_i^r+1, j-1]$.
   
       To determine $c(n_l,n_r)$, we must compute $c(i,j)$ for all $i \in [0,n_l-1]$ and $j \in [0,n_r-1]$, as well as $c(i,n_r)$ for all $i \in [0,n_l-1]$, 
       and $c(n_l,j)$ for all $j \in [0,n_r-1]$. This entails performing a total of $(n_r \cdot n_l) + n_r + n_l = O(n^2)$ computations in advance, 
       constituting
       the subproblems necessary to solve $c(n_l,n_r)$.
    \end{proof}
   
    \begin{lemma}\label{lemma:constant}
        Given the Equation~\ref{eq:rr2}, for each $i \in N_l$ and $j \in N_r$, $c(i,j)$ can be computed in $O(1)$.
    \end{lemma}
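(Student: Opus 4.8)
The plan is to reduce the two-dimensional recurrence to two instances of the one-dimensional machinery of Section~\ref{sec:improve_n}. Since $c(i,j)=\min\{L(i,j),R(i,j)\}$ by Equation~\ref{eq:rr2} and the final $\min$ costs $O(1)$, it suffices to show that each of $L(i,j)$ and $R(i,j)$ can be produced in amortized $O(1)$ time. The key observation is that, for a fixed column index $j$, Equation~\ref{eq:rr2_L} has exactly the shape of Equation~\ref{eq:e2}: the role of $c(\cdot)$ is played by the restriction $w\mapsto c(w,j)$, the role of the release dates by $r^l$, of the distances by $\tau^l$, and of $k$ by $k^l_j$. Symmetrically, for a fixed row index $i$, Equation~\ref{eq:rr2_R} is the same recurrence with $w\mapsto c(i,w)$, $r^r$, $\tau^r$ and $k^r_i$. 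So I would replay the proofs of Lemmas~\ref{lema:lk}, \ref{lema:l2} and \ref{lema:l3} once per column (for the $L$-terms) and once per row (for the $R$-terms), keeping $n_r+1$ min-queues $Q^l_0,\dots,Q^l_{n_r}$ together with pointers $k^l_0,\dots,k^l_{n_r}$, and $n_l+1$ min-queues $Q^r_0,\dots,Q^r_{n_l}$ with pointers $k^r_0,\dots,k^r_{n_l}$.

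Next I would fix the evaluation order so that every value read is already available. Filling the table row by row --- outer loop $i=0,\dots,n_l$, inner loop $j=0,\dots,n_r$ --- guarantees that when $c(i,j)$ is computed, all $c(w,j)$ with $w<i$ (needed by $L(i,j)$ through Equation~\ref{eq:rr2_L}) have been set in previous outer iterations, and all $c(i,w)$ with $w<j$ (needed by $R(i,j)$ through Equation~\ref{eq:rr2_R}) have been set earlier in the current row. At the start of outer iteration $i$, for each column $j$ I would \texttt{enqueue} the entry $a^l_{i-1,j}=c(i-1,j)+2\tau^l_{i}$ into $Q^l_j$, advance $k^l_j$ while $c(k^l_j+1,j)\le r^l_i$ (removing the dequeued entries from $Q^l_j$), and then read $L(i,j)=\min\{\,r^l_i+2\tau^l_{k^l_j+1},\,Q^l_j.\mathtt{find\_min}()\,\}$ exactly as in Lemmas~\ref{lema:l2} and \ref{lema:l3}; the symmetric bookkeeping on $Q^r_i$ and $k^r_i$ within the inner loop yields $R(i,j)$, and $c(i,j)=\min\{L(i,j),R(i,j)\}$.

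For the amortization, fix a column $j$ and let $i$ range over $0,\dots,n_l$: the pointer $k^l_j$ is non-decreasing and so performs at most $n_l$ advances in total (the argument of Lemma~\ref{lema:lk}), and $Q^l_j$ receives $n_l$ \texttt{enqueue} and at most $n_l$ \texttt{dequeue} operations, each $O(1)$ amortized. Summing over the $n_r+1$ columns this is $O(n_l n_r)=O(n^2)$ work devoted to all $L$-evaluations, hence $O(1)$ amortized per cell, and the row machinery for the $R$-evaluations is $O(n^2)$ by the identical count. Therefore each $c(i,j)$ is produced in amortized $O(1)$ time.

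The step I expect to be the main obstacle is a prerequisite hidden in the pointer arguments above: they use that, for every fixed $j$, the partial function $w\mapsto c(w,j)$ is non-decreasing (and, symmetrically, $w\mapsto c(i,w)$ for every fixed $i$), which is what makes $\{w : c(w,j)\le r^l_i\}$ a prefix and $k^l_j$ monotone in $i$. This is the two-dimensional analogue of Lemma~\ref{lema:cresc}, but it does not follow from Lemma~\ref{lema:cresc} verbatim because $c(w,j)$ is a minimum of $L$ and $R$ rather than a single one-dimensional recurrence. I would therefore first establish this monotonicity separately, by induction on $i+j$: assuming $c$ is non-decreasing in each coordinate on all smaller states, I would run the case analysis of the proof of Lemma~\ref{lema:cresc} on $L(\cdot,j)$ and on $R(i,\cdot)$ individually, and then use that the pointwise minimum of two non-decreasing functions is non-decreasing. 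With this monotonicity in hand, the rest is a routine lift of the one-dimensional argument.
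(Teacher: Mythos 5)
Your proposal is correct and takes essentially the same route as the paper's proof: the paper likewise keeps one pointer $k_j^l$ and one \texttt{minqueue} per column $j$ (and symmetrically one $k_i^r$ and one \texttt{minqueue} per row $i$), replays Lemmas~\ref{lema:lk}, \ref{lema:l2} and \ref{lema:l3} on each fixed row/column, and charges the total $n_r\cdot O(n_l)+n_l\cdot O(n_r)=O(n^2)$ pointer and queue work over all cells to obtain amortized $O(1)$ per state. Your closing remark that the pointer argument presupposes coordinate-wise monotonicity of $w\mapsto c(w,j)$ and $w\mapsto c(i,w)$ --- a two-dimensional analogue of Lemma~\ref{lema:cresc} that the paper uses implicitly without stating or proving it --- is a legitimate additional point, and your induction sketch for establishing it is sound.
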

    
    \begin{proof}
       To show that $c(i,j)$ is computed in constant time, we need to show that $L(i,j)$ and $R(i,j)$ are computed in $O(1)$.
       In function $L(i,j)$, we have a minimum calculation between two terms: $r_i^{l} + 2\tau_{k_j^l+1}^{l}$ and  
       $\min_{k_j^l < w < i}\{ c(w,j) + 2\tau_{w+1}^{l} \}$. As in the specific case when the depot is 
       located in an extremity, we need to show that these two terms are computed in $O(1)$. Similarly, to the $R(i.j)$ function.
        Since both terms depends on the variables $k^l$ ($k^r$ for $R$) we also need to show that they can be computed in
         $O(1)$. 
   
       \begin{itemize}
           \item  Given a $j \in N_r$, let's show how to compute $k_j^l$ for each $i \in [1, n_l]$.
           Let $k_j^{l'}$ denote the value of $k_j^l$ from the previous iteration (when $i$ was $i-1$). Additionally, for the initial 
           iteration where $i=0$, we set $k_j^l=0$. So, for each $i \in [1,n_l]$, we iterate $w$ starting from $k_j^{l'}$ until $c(w,j) \leq r_i^l$ is no 
           longer true. So, the new value of $k_j^l$ is equal to the last value of $w$ where the inequality is true.
           Upon completing these operations for all $i \in [1,n_l]$, the variable $w$ will have ranged from $0$ to $n_l-1$ in the worst-case. 
           Overall, for a given $j$, the process requires $O(n_l)$ time to execute entirely, with each choice of $k_j^l$ for $i \in [1,n_l]$ accomplished
            in constant time, $O(1)$.

       
   
           Given a $i \in N_l$, the $k_i^r$ can be computed similarly for each $j \in [1,n_r]$.
           That is, $O(n_r)$ to execute the entire process and $O(1)$ to execute the choice of $k_i^r$ for each $j \in [1, n_r]$ for a given $i$.
           
          When computing \( c(n_l, n_r) \), we require \( n_r \) variables \( k_j^l \) and \( n_l \) variables \( k_i^r \). The cost
            of computing them is given by $n_r \cdot O(n_l) + n_l \cdot O(n_r) = O(n_l \cdot n_r) + O(n_l \cdot n_r) = O(2 n_l \cdot n_r) = O(n^2)$.
            Thus, the amortized cost per calculation in a single iteration is \( O(1) \).

           \item Given that $k_j^l$ can also be calculated in $O(1)$, $r_i^{l} + 2\tau_{k_j^l+1}^{l}$ is just a sum and can be calculated in $O(1)$ time.
            Analogous to 
           $r_j^{r} + 2\tau_{k_i^r+1}^{r}$.

           \item To compute the term \( \min_{k_j^l < w < i} \{ c(w,j) + 2\tau_{w+1}^{l} \} \) in \( O(1) \), we use  
             the same \texttt{minqueue} from \citet{sundar1989worst}, as discussed in Lemma~\ref{lema:l3}.  
             However, this time we require more than one. Since the minimization operation iterates only over $w$,  
              we can use a separate \texttt{minqueue} for each \( j \in N_l \) to efficiently compute the minimum,  
              as the values of \( c(w,j) \) vary for different \( j \).  
              
           Let's define, for a given $j$, $a_{ij}^l = c(i,j) + 2\tau_{i+1}^l$. Our current objective is to identify the 
           smallest value $a_{wj}^l$ such that $w \in [k_j^l+1,i-1]$. For each $j \in N_r$, we maintain a \texttt{minqueue}
            with a cost of $O(n_l)$ for each queue. This process mirrors the operation described in Lemma \ref{lema:l3}. 
            
            Similarly, to compute the function $R$, for each $i \in N_l$, we maintain a queue with a cost of $O(n_r)$.
           Then, we have $n_r$ queues with the final cost $O(n_l)$ and $n_l$ queues with the final cost $O(n_r)$ which is
            equivalent to $n_r \cdot O(n_l) + n_l \cdot O(n_r) = O(n_l \cdot n_r)+O(n_l\cdot n_r) = 2\cdot O(n_l\cdot n_r) = O(n^2)$. 
            Therefore, in amortized time, each calculation in one iteration requires $O(1)$ time.

       \end{itemize}
    \end{proof}
    
    \begin{theorem}\label{theorem:n_squared}
        The recurrence relation \( c(n_l, n_r) \), given by Equation~\ref{eq:rr2}, can be computed in \( O(n^2) \).
    \end{theorem}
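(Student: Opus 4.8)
The plan is to obtain Theorem~\ref{theorem:n_squared} as an immediate consequence of Lemma~\ref{lemma:subp} and Lemma~\ref{lemma:constant}, together with a bookkeeping argument about the order in which the dynamic-programming table is filled. First I would recall that, by Lemma~\ref{lemma:subp}, solving $c(n_l,n_r)$ requires computing $c(i,j)$ for $i \in [0,n_l]$ and $j \in [0,n_r]$, which is $(n_r \cdot n_l) + n_r + n_l + 1 = O(n^2)$ entries (using $n_r + n_l = n$). Then, by Lemma~\ref{lemma:constant}, each entry $c(i,j)$ is evaluated in amortized $O(1)$ time via $\min\{L(i,j),R(i,j)\}$, where the amortization is distributed over the maintenance of the $n_r$ pointers $k_j^l$, the $n_l$ pointers $k_i^r$, the $n_r$ \texttt{minqueue}s used in $L$, and the $n_l$ \texttt{minqueue}s used in $R$. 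Multiplying the $O(n^2)$ subproblem count by the $O(1)$ amortized cost gives $O(n^2)$, and I would add the $O(n^2)$ aggregate cost of all the auxiliary structures (already accounted for inside Lemma~\ref{lemma:constant}) to reach the stated bound.

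The step I expect to require the most care is justifying that the amortized analyses of Lemma~\ref{lemma:constant} remain valid when all the pointers and queues are advanced simultaneously under a single global traversal order of the table. I would make this explicit by fixing an evaluation order that respects the dependency structure: since $L(i,j)$ reads entries $c(w,j)$ with $w<i$ and $R(i,j)$ reads entries $c(i,w)$ with $w<j$, iterating with $i$ from $0$ to $n_l$ in the outer loop and $j$ from $0$ to $n_r$ in the inner loop (or vice versa) already makes every referenced entry available. For a fixed $j$, the pointer $k_j^l$ and the $j$-th \texttt{minqueue} advance monotonically as $i$ increases, so over the whole run of the outer loop the total work charged to index $j$ is $O(n_l)$; summing over the $n_r$ values of $j$ gives $O(n_l \cdot n_r)$. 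Symmetrically, the structures indexed by $i \in N_l$ contribute $O(n_l \cdot n_r)$ in aggregate. Hence the total is $O(n_l \cdot n_r) + O(n_l \cdot n_r) + O(n) = O(n^2)$.

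Finally, I would close by remarking that the base case $c(0,0)=0$ and the boundary rows and columns $c(i,n_r)$, $c(n_l,j)$ fit into the same scheme without asymptotic overhead, so computing $c(n_l,n_r)$ — and thus the optimal completion time for the GTSP-rd (time) on a path with the depot at an arbitrary vertex — takes $O(n^2)$ time overall, matching the claim of the theorem.
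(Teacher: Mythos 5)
Your proposal is correct and follows essentially the same route as the paper's proof: combining Lemma~\ref{lemma:subp} (the $O(n^2)$ count of subproblems) with Lemma~\ref{lemma:constant} (amortized $O(1)$ per subproblem) to conclude the $O(n^2)$ bound. The extra care you take in fixing a traversal order and verifying that the pointers and queues advance monotonically is a welcome elaboration of what the paper leaves implicit, but it does not change the argument.
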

    
    \begin{proof}
        As shown in Lemma~\ref{lemma:subp}, the number of subproblems in computing \( c(n_l, n_r) \) is \( O(n^2) \).  
        Furthermore, by Lemma~\ref{lemma:constant}, each subproblem can be solved in amortized constant time, \( O(1) \).  
        Since both statements hold, we conclude that Equation~\ref{eq:rr2} can be computed via dynamic programming in \( O(n^2) \).
    \end{proof}
    
    The Algorithm~\ref{alg2} presents the dynamic programming approach to solve the Equation~\ref{eq:rr2} utilizing the
     operations delineated in Lemma~\ref{lemma:constant}. The minqueues $QL_j$ for $j \in \{1,\dots,n_r\}$ represent the 
     queues used to compute the equation $L$. The Figure~\ref{example_minques} exemplifies why we need more than one \texttt{minqueue} to caculate $L$.
    An analogous process for $QR_i$ for $i \in \{1,\dots,n_l\}$ and the equation $R$.
   
     \begin{figure}[h]
        \begin{center}
            \includegraphics[width=0.9\textwidth]{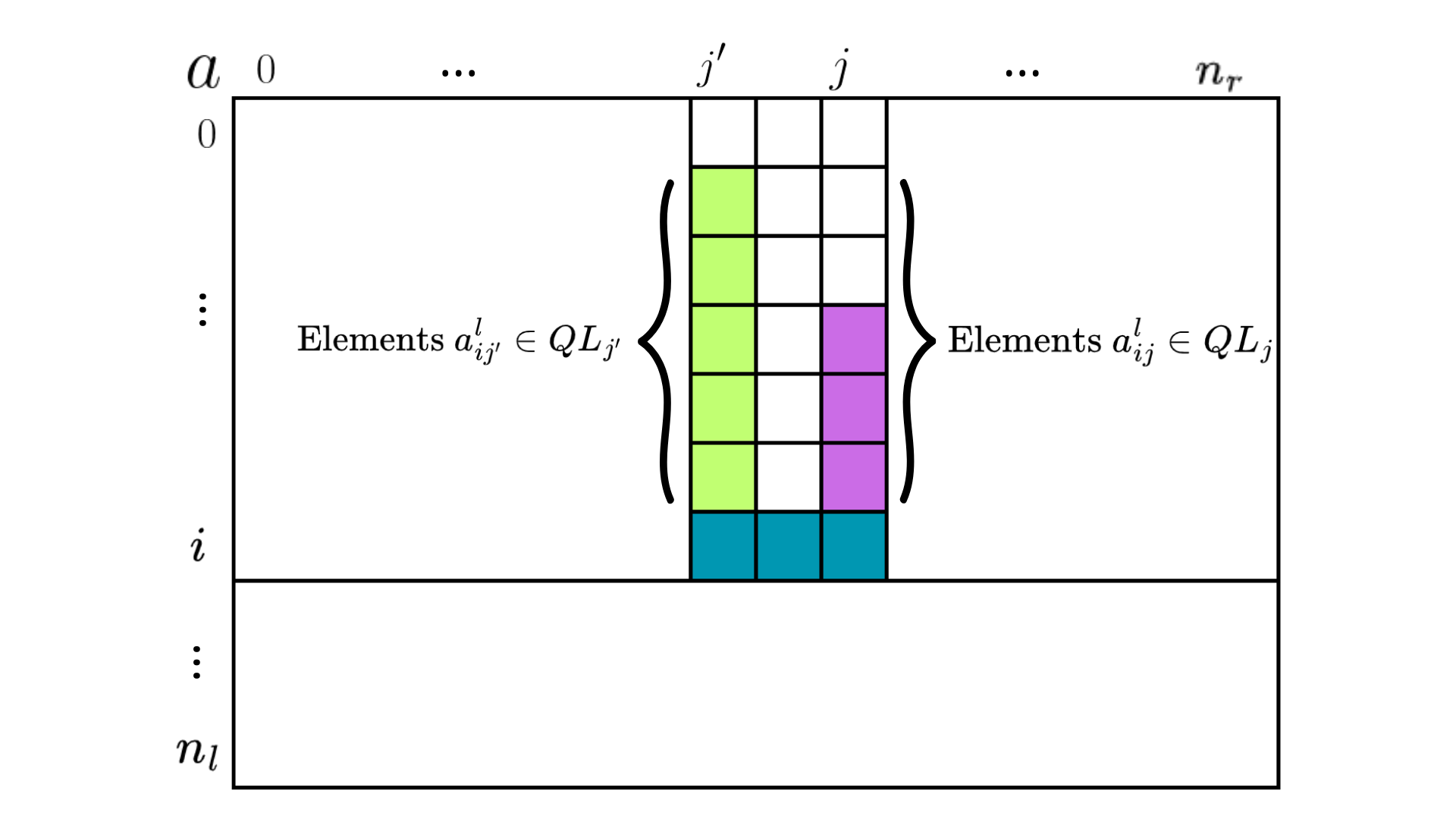}
        
            \caption{\label{example_minques} Example of \texttt{minqueue} usage to calculate $L(i,j)$ to some $i \in N_l$.}           
            
        \end{center}
    
    \end{figure}

    \begin{algorithm}
        \DontPrintSemicolon

        \SetKwInOut{Input}{Input}
        \SetKwInOut{Output}{Output}

        \Input{$n_l$: number of left vertices, $n_l$: number of right vertices, $r$: release dates, $\tau$: travel time from depot.}
        \Output{$c[i,j]$: the minimum completion time to attend the customers $\{1,\dots,i\} \subseteq N_l$ and $\{1,\dots,j\} \subseteq N_r$}

        \;

           $c[i,j] \gets \infty \quad \forall i \in [0,n_l], \forall j \in [0,n_r] $\;
           $k_j^l \gets 0 \quad \forall j \in [0,n_r] $ \;
           $k_i^r \gets 0 \quad \forall i \in [0,n_l] $ \;

           \;

           $QL_j \gets minqueue() \quad \forall j \in [0,n_r]$ \;
           $QR_i \gets minqueue() \quad \forall i \in [0,n_l]$ \;

           \;

           \For{$i \leftarrow 0$ \KwTo $n_l$}{
            \For{$j \leftarrow 0$ \KwTo $n_r$}{
                   
                   \;
                   \If{$i = 0$ \textbf{and} $j = 0$}{
                       $c[i,j] \gets 0$ \;
                   }

                   \;
                   \While{$c[k_j^l,j] \leq r_i^l$}{
                       $QL_j$.\texttt{remove}($a_{k_j^lj}^l$) \;
                       $k_j^l \gets k_j^l + 1$ \;
                   } 
                   \;
   
                   \While{$c[i,k_i^r] \leq r_j^r$}{
                       $QR_i$.\texttt{remove}($a_{ik_i^r}^r$) \;
                       $k_i^r \gets k_i^r + 1$ \;
                   }
                   
                   \;
   
                   $L \gets \min(r_i^{l} + 2\tau_{k_j^l}^{l} , QL_j.\texttt{find\-min()})$\;
                   $R \gets \min(r_j^{r} + 2\tau_{k_i^r}^{r} , QR_i.\texttt{find\-min()})$\;
                   $c[i,j] \gets \min(L,R)$ \;
                   
                   \;
   
                   $a_{ij}^l \gets c[i,j] + 2\tau_{i+1}^l$\;
                   $QL_j$.\texttt{insert}($a_{ij}^l$)\;
                   $a_{ij}^r \gets c[i,j] + 2\tau_{j+1}^r$\;    
                   $QR_i$.\texttt{insert}($a_{ij}^r$)\;
               }
           }
        \caption{GTSP-rd(time) MinQueue in path}\label{alg2}
    \end{algorithm}

   \subsection{Solution $O(n^2 \log \log n)$ for GTSP-rd (distance)}
   
   As in GTSP-rd (time), where the optimal solution consists of routes containing customers from only one side of the depot,  
   we define a recurrence relation similar to Equation~\ref{eq:rr2}.

    Given that $\lambda(i,j)$ represents the latest time to dispatch a route that attend $i \in N_l$ or $j \in N_r$, such that we still have to attend
     the customers
    $\{i \cdots n_l\} \in N_l$ and $\{j\cdots n_r\} \in N_r$. To compute $\lambda(i,j)$, we select the optimal choice between incorporating customer $j$
     into a non-interlacing route to the right of depot $R(i,j)$ or 
    including customer $i$ in a non-interlacing route to the left of depot $L(i,j)$. This is expressed by the following recurrence:
   
   \begin{equation}\label{eq:rr2:dist}
       \lambda(i,j) = 
       \begin{cases}
           D, \quad \text{if } i = n_l, j = n_r\\
           \max\{L(i,j),R(i,j)\}, \quad \text{otherwise }
       \end{cases}
   \end{equation}
   
   The operations of $L$ and $R$ closely resemble Equation~\ref{eq:dist}. In $L$,
    we examine all feasible non-interlacing routes that could involve customer $i$ 
    as the farthest customer from the depot, similarly to $R$.
   
   \begin{equation}\label{eq:rr2_L:dist}
       L(i,j) = \max_{w > i} \{ \lambda(w,j) \mid \lambda(w,j) - r_{w-1}^l \geq 2\tau_i^l \} - 2\tau_i^l
   \end{equation}
   
   \vspace{-1em}
   
   \begin{equation}\label{eq:rr2_R:dist}
       R(i,j) = \max_{w > j} \{ \lambda(i,w) \mid \lambda(i,w) - r_{w-1}^r \geq 2\tau_j^r \} - 2\tau_j^r
   \end{equation}
   
   Just like in the solution presented to compute Equation~\ref{eq:dist} in $O(n \log \log n)$,
   we utilize two heaps to calculate the functions $L$ and $R$ in $O(\log \log n)$ time
   at each iteration. 

      To compute the function \( L \), we employ \( n_r \) min-heaps storing \( \lambda(w,j) - r_{w-1}^l \) and \( n_r \) max-heaps  
storing \( \lambda(w,j) \), one for each \( j \in N_r \). These heaps perform the same operations outlined in the proof of  
Theorem~\ref{theom:dist_1}, independently for each \( j \). A similar process is used to compute \( R \).

   \begin{theorem}
        The recurrence relation $\lambda(1,1)$, given by Equation~\ref{eq:rr2:dist}, can be computed in $O(n^2 \log \log n)$.
   \end{theorem}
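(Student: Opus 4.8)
The plan is to combine a subproblem count with an amortized per-subproblem cost, mirroring the structure of Theorem~\ref{theorem:n_squared} but replacing the \texttt{minqueue}-based argument with the two-heap argument of Theorem~\ref{theom:dist_1}. First I would count the subproblems: to obtain $\lambda(1,1)$ via Equation~\ref{eq:rr2:dist} we must evaluate $\lambda(i,j)$ for all $i \in [1,n_l]$ and $j \in [1,n_r]$, together with the boundary states needed by $L$ and $R$ (the row $\lambda(i,n_r)$, the column $\lambda(n_l,j)$, and the base case $\lambda(n_l,n_r)=D$). As in Lemma~\ref{lemma:subp}, this totals $n_l\cdot n_r + n_l + n_r + 1 = O(n^2)$ states. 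The evaluation order is dictated by the dependencies: $L(i,j)$ reads $\lambda(w,j)$ for $w>i$ and $R(i,j)$ reads $\lambda(i,w)$ for $w>j$, so we process $i$ from $n_l$ down to $1$ and, for each fixed $i$, $j$ from $n_r$ down to $1$.

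Next I would bound the cost of a single state. By Equation~\ref{eq:rr2:dist}, $\lambda(i,j)=\max\{L(i,j),R(i,j)\}$, so it suffices to compute each of $L(i,j)$ and $R(i,j)$ in $O(\log\log n)$ amortized time. For $L$, I would instantiate, for each fixed $j\in N_r$, an independent pair of Thorup heaps exactly as in Theorem~\ref{theom:dist_1}: a max-heap $H_1^{(j)}$ holding the admissible values $\lambda(w,j)$ and a min-heap $H_2^{(j)}$ holding the keys $\lambda(w,j)-r_{w-1}^l$, used to purge from $H_1^{(j)}$ precisely those $w$ for which $\lambda(w,j)-r_{w-1}^l$ has dropped below the current threshold $2\tau_i^l$. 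The monotonicity $2\tau_i^l \le 2\tau_{i-1}^l$ from Proposition~\ref{prop:p3} (applied to the left side) guarantees the threshold is non-decreasing as $i$ decreases, so each element is inserted once and removed at most once from $H_1^{(j)}$, giving $O(n_l \log\log n)$ total work on that pair of heaps. Summing over the $n_r$ values of $j$ gives $O(n_l n_r \log\log n) = O(n^2\log\log n)$ for all of the $L$-computations. The $R$-computations are symmetric: for each fixed $i\in N_l$ keep a pair of heaps indexed by $i$, with threshold $2\tau_j^r$ non-decreasing as $j$ decreases, for a further $O(n^2\log\log n)$. Amortized over the $O(n^2)$ states this is $O(\log\log n)$ each, and the final $\max$ of two values is $O(1)$.

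Finally I would assemble the two bounds: $O(n^2)$ subproblems at amortized $O(\log\log n)$ per subproblem yields total time $O(n^2\log\log n)$ for computing $\lambda(1,1)$. I would also remark, as with Algorithm~\ref{alg:dist_1}, that substituting binary heaps for the Thorup heaps degrades this to $O(n^2\log n)$ but removes the space overhead.

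The main obstacle I anticipate is not the arithmetic but making the amortization argument across the $n_r$ separate heap pairs (for $L$) rigorous simultaneously with the interleaved $R$-updates: within one iteration $(i,j)$ we both insert into $H_1^{(j)}$, $H_2^{(j)}$ (the $L$-side heaps for column $j$) and into the $R$-side heaps for row $i$, and one must check that the purge conditions on the two families never conflict and that each heap still sees a monotone threshold sequence. Spelling out that the threshold for $H^{(j)}$ depends only on $i$ (hence is monotone over the outer loop) while the threshold for the $R$-heap of row $i$ depends only on $j$ (hence monotone over the inner loop) is what closes the argument, and it is worth stating explicitly rather than leaving to ``analogously to Theorem~\ref{theom:dist_1}''.
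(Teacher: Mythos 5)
Your proposal is correct and follows essentially the same route as the paper: $O(n^2)$ states evaluated in decreasing $(i,j)$ order, with one pair of Thorup heaps per column for $L$ and per row for $R$, exactly as the paper sketches before the theorem and invokes (together with the Theorem~\ref{theorem:n_squared}-style subproblem count) in its proof. Your explicit remarks on the monotone thresholds and the interleaved updates merely spell out what the paper leaves to ``analogously'' and ``omitted for brevity''.
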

   
   \begin{proof}
       
   Since each state of $L$ and $R$ can be computed through heaps described in \citet{thorup2000ram} with time 
   complexity of $O(\log \log n)$, and we have $n_r \cdot n_l$ states, the complexity of this solution amounts to
    $O(n^2 \log \log n)$. A similar proof to that seen in Theorem~\ref{theorem:n_squared} can be
     conducted in this scenario as well, but is omitted here for brevity.
   \end{proof}

   Analogous commented in the Section~\ref{sec:improve_nlog}, the complexity depends on the choice of the auxiliary heap. If we use 
   a binary heap, which is easier to implement, the complexity becomes $O(n^2\log n)$.

   \section{Concluding remarks}\label{sec:conclusion}

   In this paper, we addressed the Graphical Traveling Salesman Problem with release dates (GTSP-rd) on paths.  
   Our contributions include the development of algorithms that improve existing solutions. These solutions build  
   on previous recurrence relations and employ dynamic programming for efficient implementation.  
   
   For paths with depots at the extremities, we presented an \( O(n) \) solution for GTSP-rd (time) and an  
   \( O(n \log \log n) \) solution for GTSP-rd (distance). Additionally, for general paths where depots can 
   be located anywhere, we introduced an \( O(n^2) \) solution for GTSP-rd (time) and an \( O(n^2 \log \log n) \) 
   solution for GTSP-rd (distance).
   
   Future work can extend these solution strategies to more complex graph structures, such as subdivided stars, constrained trees,  
   or other graphs studied in the context of GTSP.

\bibliography{lipics-v2021-sample-article}

\end{document}